        \theoremstyle{plain}
\def\ds{\displaystyle}
\newcommand{\R}{{\mathbb{R}}}
\newcommand{\divv}{\nabla_v \cdot}
\newcommand{\divx}{\nabla_x \cdot}
\newcommand{\dt}{\partial_t}
\newcommand{\p}{\partial}
\newcommand{\Kn}{\mathrm{Kn}\thinspace}
\newcommand{\la}{\left\langle}
\newcommand{\ra}{\right\rangle}
\newcommand{\cint}[1]{\langle #1 \rangle}
\def\eps{\varepsilon}
        \theoremstyle{plain}
        \newtheorem{proposition}{Proposition}[section]
        \theoremstyle{remark}
        \theoremstyle{remark}
        \theoremstyle{remark}
\begin{document}

\begin{center}
{\bf BGK and Fokker Planck Models   for thermally perfect gases}

\vspace{1cm}
J. Mathiaud$^{1,2}$,  L. Mieussens$^1$

\bigskip
$^1$Univ. Bordeaux, Bordeaux INP, CNRS, IMB, UMR 5251, F-33400 Talence, France.\\
{ \tt(Luc.Mieussens@math.u-bordeaux.fr)}

\bigskip
$^2$Centre  Lasers  Intenses  et  Applications, Université  de  Bordeaux-CNRS-CEA,UMR  5107, , F-33400 Talence, France. { \tt(julien.mathiaud@u-bordeaux.fr)}\\

\end{center}

\begin{abstract}
We propose two models of the Boltzmann equation (BGK and Fokker-Planck
models) for rarefied flows of thermally perfect gases. These
models take into account various models of energy, which are required for high temperature flows, like for
atmospheric re-entry problems as long as the pressure law for perfect gases is true. We prove that these models satisfy
conservation and entropy properties (H-theorem), and we derive their
corresponding compressible Navier-Stokes asymptotic.
\end{abstract}

 \bigskip

Keywords: Fokker-Planck model, BGK model, H-theorem, Rarefied Gas
Dynamics, thermally perfect gases

\section{Introduction}

Numerical simulation of atmospheric reentry flows requires to solve
the Boltzmann equation of Rarefied Gas Dynamics. The standard method
to do so is the Direct Simulation Monte Carlo (DSMC) method
\cite{bird,BS_2017}, which is a particle stochastic method. However,
it is sometimes interesting to have alternative numerical methods,
like, for instance, methods based on a direct discretization of the
Boltzmann equation (deterministic approaches). This is hardly possible
for the full Boltzmann equation (except for monatomic gases,
see~\cite{luc_2014,herouardgal}), since this is still much too computationally
expensive for real gases. But BGK like model equations~\cite{bgk} are
very well suited for such deterministic codes: indeed, their
complexity can be reduced by the well known reduced distribution
technique~\cite{Chu_1965}, which leads to intermediate models between
the full Boltzmann equation and moment
models~\cite{Struchtrup_moment_book}. The Fokker-Planck
model~\cite{cercignani} is another model Boltzmann equation that can
give very efficient stochastic particle methods, see~\cite{Grj2011}.

These model equations have already been extended to polyatomic
gases, so that they can take 
into account the internal energy of rotation of gas molecules. They
contains correction terms that lead to correct transport coefficients: the ESBGK or
Chekhov's models~\cite{Holway,esbgk_poly,S_model}, and the cubic
Fokker-Planck and
ES-Fokker-Planck~\cite{Grj2011, Grj2013,Mathiaud2016,Mathiaud2017}.

For high temperature flows, like in space reentry problems, other energies can be  activated (like vibrations: \cite{hdr,Mathiaudvib}) and have a significant
influence on energy transfers in the gas flow. It is therefore
interesting to extend the model equations to take these energies into account. Several extended BGK models have been
recently proposed to do so, for
instance~\cite{RS_2014,WYLX_2017,ARS_2017,KKA_2019}, and a 
Fokker-Planck model was proposed earlier in~\cite{Grj2013}.

In this paper, we create BGK and Fokker-Planck models for every thermally perfect gas (perfect gas with energy depending on temperature without knowing the precise form of the dependence).  Since no obvious kinetic description of degrees of liberty can be precisely given for such gases, we directly use a reduced model with one function for translation energy and the other one for other degrees of freedom: note that with this reduction, only higher order moments with respect to the vibration energy variable are lost: the macroscopic quantities of interest like pressure, temperature, and heat flux,are the same as in the non-reduced model. Moreover, since the reduced variable is not the velocity,this reduction does not require any assumption or special geometries. We prove that these reduced models satisfies the
H-theorem as well as conservation properties. This paper is a first step towards the creation of ESBGK or ES-Fokker-Planck models able to reproduce the 
relaxation of several energies towards equilibrium

Our paper is organised as follows. In section~\ref{sec:kinetic}, we
present the kinetic description of a thermally  perfect gas and we discuss the
mathematical properties of the reduced distributions that will be used for our
models. Our BGK and Fokker-Planck models are presented in
sections~\ref{sec:BGK} and~\ref{sec:FP}, respectively. In
section~\ref{sec:CE}, the hydrodynamic limits of our models, obtained by a
Chapman-Enskog procedure, are discussed. In section~\ref{sec:extens} we provide an extension of our framework to several energies. Finally section~\ref{sec:conclusion}
gives some perspectives around this work.

\section{Kinetic description of a thermally perfect gas}
\label{sec:kinetic}
\subsection{Some thermodynamics on thermally perfect gases}

Before writing any kinetic model, we consider some  thermodynamics.
A thermally perfect gas is a gas satisfying $P=\rho RT$ where $P$ is the pressure, $\rho$ the density and $T$ the temperature of a gas with $R$ , constant of the perfect gas. However this law does not give the relation between energy and temperature which can be linear (for simple models of rotations for instance) or non linear (for instance for vibrations in a diatomic molecule can be set to $\ds e(T)=\frac52RT+\frac {RT_0}{\exp(T_0/T)-1}$ for some $T_0$ characteristic temperature of vibrations) or simply tabulated according to temperature. In simple cases when the relation between energy  and temperature is well defined one can construct BGK or Fokker-Planck models to capture correctly the physics. We do want to extend these models to any kind of energy. In order to do so we suppose that energy $e$ can be defined through $e(T)=e_{tr}(T)+e_{int}(T)$ where:\\
\begin{equation}
    e_{tr}(T)=\frac32RT \label{eq-etr},e_{int}(T)=e(T)-e_{tr} (T) ,
\end{equation} $ e_{tr}$ being the translational kinetic energy  and $e_{int}$
 represents all the other internal energies . Moreover we also suppose that  $e_{int}$ is a strictly increasing function of temperature: $e$ is then also a strictly increasing function so that there exists   one-to-one  functions $\mathbb T$  and $\mathbb T_{int}$ such that
\begin{eqnarray}
 T=\mathbb T(e) \qquad,&\qquad e=e(T)\qquad &, \qquad \label{eq-defT}  \frac{d}{dT} e=\frac32R+c_v^{int}>0,\\
  T_{int}=\mathbb T_{int}(e_{int}) \qquad,&\qquad e_{int}=e_{int}(T_{int})\qquad \label{eq-defTint}&,\qquad\frac{d}{dT_{int}}e_{int}=c_v^{int}>0,
\end{eqnarray}
where $c_v^{int}$ is the specific heat associated to $e_{int}$. 

\

We can also define an entropy $s^{int}$ satisfying $\ds ds^{int}=\frac{de_{int}}{\mathbb T_{int}(e_{int})}$ up to some constant by integration (we will give some expressions later for the simple rotational and vibrational case). Similarly an entropy  for translations can be defined through $\ds ds^{tr}=\frac{de_{tr}}{\mathbb T_{tr}(e_{tr})}$.
The second principle   now writes:
$$ds(\rho,T)=ds^{tr}(T)+ds^{int}(T)-R\frac{d\rho}{\rho}.$$

We now have now all the necessary tools to construct our BGK and Fokker-Planck models.
 
\subsection{Distribution function and local equilibrium}
We consider a thermally perfect gas. Since we only know the relation between  temperature and  energy, there is no clear extension to other degrees of freedom than the translational ones for an equilibrium state such as the one of polyatomic gases (\cite{esbgk_poly}) or vibrational diatomic gases (\cite{mathiaud2019BGK}). We propose to separate the translational degrees of freedom of molecules with the remaining degrees of freedom: even if there is a loss of information for high order moments in internal energy this reduction will be enough to capture both pressure and thermal flux which are the quantities of interest in our problem. To do that we define $F(t,x,v)$ the mass
density of
molecules with position $x$, velocity $v$ and $G(t,x,v)$ the internal energy density of molecules with position $x$, velocity $v$. We directly write a two model distribution as it was done in (\cite{esbgk_poly,mathiaud2019BGK}).
The corresponding local equilibrium
distributions for $F$ and $G$  are defined by (see \cite{bird})
\begin{eqnarray}
 M_{int}[F,G](v) &=&\frac{\rho}{\sqrt{2\pi RT}^3}  \exp\left(-\frac{\frac12|u - v|^2}{ RT}\right) ,\\ 
e_{int}(T) M_{int}[F,G](v) &=&e_{int}(T)\frac{\rho}{\sqrt{2\pi RT}^3}  \exp\left(-\frac{\frac12|u - v|^2}{ RT}\right) .
 \label{maxvib}
\end{eqnarray}

Here, $\rho$ is the gas density, $T$ its equilibrium temperature and $u$ its mean velocity, defined through:
\begin{eqnarray}  \label{eq-mtsred} 
 &&\rho = \la F \ra_{v}\qquad ,\qquad
 \rho u = \la v F \ra_{v},\qquad\\
&&  \rho e_{tr} = \la (\frac12(v-u)^2)  F\ra_{v},  \rho e_{int} = \la G\ra_{v},\\
&& \rho e = \la (\frac12(v-u)^2)  F\ra_{v}+\la G\ra_{v},\ T=\mathbb T(e)
\end{eqnarray}
where we use the notation $ \la \psi
\ra_{v}=\iint \psi(t,x,v)\, dv $ for any function $\psi$.

\

Immediate computations of Gaussian functions lead to the following proposition:

\begin{proposition}[Conservation properties]
  \begin{eqnarray*}   
 &&\rho =  \la  M_{int}[F,G]\ra_{v},\qquad
 \rho u = \la v  M_{int}[F,G] \ra_{v},\qquad\\
&&  \rho e_{tr} = \la (\frac12(v-u)^2)   M_{int}[F,G]\ra_{v},  \rho e_{int} = \la  e_{int}(T)M_{int}[F,G]\ra_{v},\\
&& \rho e = \la (\frac12(v-u)^2)   M_{int}[F,G]\ra_{v}+\la e_{int}(T)M_{int}[F,G]\ra_{v},\ T=\mathbb T(e).
\end{eqnarray*}
\end{proposition}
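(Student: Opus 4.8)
The plan is to verify each of the five identities by direct computation, exploiting the fact that $M_{int}[F,G]$ is a Maxwellian with parameters $\rho$, $u$, $T$ that are \emph{defined} through the moments of $F$ and $G$ in~\eqref{eq-mtsred}. The key observation is that every identity is a statement about Gaussian moments, so the entire proof reduces to the standard integrals
\begin{equation*}
 \la M_{int} \ra_v = \rho, \qquad \la v\, M_{int}\ra_v = \rho u, \qquad \la \tfrac12|v-u|^2\, M_{int}\ra_v = \tfrac32 R T \rho.
\end{equation*}
These follow from the normalization of the Gaussian, the fact that $v-u$ is centered under $M_{int}$, and the second-moment formula $\la |v-u|^2 M_{int}\ra_v = 3RT\rho$ for a $3$-dimensional isotropic Maxwellian with variance $RT$ per component.

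First I would establish the density and momentum identities, which are immediate from the normalization constant $\rho/\sqrt{2\pi RT}^3$ and the symmetry of the Gaussian about its mean $u$. Next, for the translational energy I would compute $\la \tfrac12|v-u|^2 M_{int}\ra_v$ and use the per-component variance $RT$ to obtain $\tfrac32 RT\rho = \rho e_{tr}$, recalling the defining relation $e_{tr}(T)=\tfrac32 RT$ from~\eqref{eq-etr}; this matches the third line of~\eqref{eq-mtsred}. The internal energy identity is then almost trivial: since $e_{int}(T)$ is a scalar depending only on the macroscopic temperature $T$ (not on $v$), it factors out of the velocity integral, giving $\la e_{int}(T) M_{int}\ra_v = e_{int}(T)\,\la M_{int}\ra_v = e_{int}(T)\rho = \rho e_{int}$, where the last equality is simply the definition $e_{int}=e_{int}(T)$ with $T=\mathbb T(e)$. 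The total energy identity follows by adding the translational and internal contributions, and the temperature relation $T=\mathbb T(e)$ carries over unchanged since $T$ is defined identically in both the $F,G$ formulation and the Maxwellian formulation.

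I do not expect any genuine obstacle here: the proposition is exactly the assertion that the Maxwellian $M_{int}[F,G]$ reproduces, by construction, the same macroscopic quantities as the original pair $(F,G)$, and this is true precisely because $\rho$, $u$, and $T$ were defined in~\eqref{eq-mtsred} to be those moments. The only point requiring a modicum of care is the consistency of the temperature used inside the exponential with the temperature $\mathbb T(e)$ appearing in the energy relations: one must note that the single $T$ parametrizing the Gaussian is the same $T=\mathbb T(e)$ used to evaluate $e_{int}(T)$, so that the internal-energy integrand is indeed the correct scalar weight. Beyond that, the result is, as the authors say, a matter of \emph{immediate computations of Gaussian functions}, so the write-up will consist mainly of recording the three standard moment integrals and reading off the five claimed equalities.
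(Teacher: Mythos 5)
Your proposal is correct and follows exactly the route the paper intends: the authors dispose of this proposition with the single remark that it follows from ``immediate computations of Gaussian functions,'' which is precisely the verification via the normalization, centered first moment, and second moment $\la \tfrac12|v-u|^2 M_{int}\ra_v = \tfrac32 RT\rho$ that you carry out, together with factoring the scalar $e_{int}(T)$ out of the integral. Nothing further is needed.
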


\

We now define a reduced entropy as a function of $F$
and $G$   in the following proposition:
\begin{proposition}[Entropy]\label{entrop}
We define the following reduced entropy $\mathcal{H}(F,G)$ of $F$ and $G$ as: 
\begin{eqnarray}
\mathcal{H}(F,G)= \la F\log(F)- F\frac{s_{int}}R\left(\frac G F\right) \ra_{v}. \label{eq-HFG}
\end{eqnarray}
\begin{enumerate}
\item  The partial derivatives of $H=F\log(F)- F\frac{s_{int}}R\left(\frac G F\right)$ computed at $(F,G)$ are:
\begin{equation}  \label{eq-partialH}
  D_1H(F,G)=1+\log(F)+\frac{G}{R{\mathbb T}_{int}(G/F)F}-\frac{s_{int}}R\left(\frac G F\right),
\quad D_2H(F,G)=-\frac1{R{\mathbb T}_{int}(G/F)}.
\end{equation}

\item We note $\ds \mathbb{H} =
  \left(\begin{smallmatrix}
D_{11}H(F,G)    & \quad D_{12}H(F,G) \\
 D_{12}H(F,G)  & \quad D_{22}H(F,G)
  \end{smallmatrix}\right)$
 the Hessian matrix of $H$.  Its value is:
$$ \begin{pmatrix}
 D_{11}H(F,G)=\frac 1 F +\frac{G^2}{F^3{c_v}_{int}R{\mathbb T}^2_{int}(G/F)}, & D_{12}H(F,G)=-\frac {G}{F^2{c_v}_{int}R{\mathbb T}^2_{int}(G/F)}\\
 D_{21}H(F,G)=D_{12}H(F,G), & D_{22}H(F,G)=\frac{1}{{c_v}_{int}R{\mathbb T}^2_{int}(G/F)F}
\end{pmatrix} $$
 Moreover, the derivatives satisfy the following equality's:
\begin{equation}\label{eq-FDG} 
\begin{split}
& FD_{11}H(F,G)+GD_{21}H(F,G) = 1,\\
& FD_{12}H(F,G)+GD_{22}H(F,G) = 0.
\end{split}
\end{equation}

\item The function $(F,G) \mapsto H(F,G)$ is convex.
\end{enumerate}
\end{proposition}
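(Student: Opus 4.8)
The plan is to treat $H(F,G)=F\log F-\frac{F}{R}s_{int}(G/F)$ as a smooth function of the two real variables $F>0$ and $G$, with the ratio $G/F$ lying in the range of admissible internal energies, and to differentiate by the chain rule. Two thermodynamic relations do all the work. From the definition $ds^{int}=de_{int}/\mathbb{T}_{int}(e_{int})$ one reads off $s_{int}'(e)=1/\mathbb{T}_{int}(e)$, and from $e_{int}=e_{int}(T_{int})$ with $\frac{d}{dT_{int}}e_{int}=c_v^{int}>0$ the inverse function theorem gives $\mathbb{T}_{int}'(e)=1/c_v^{int}$, where $c_v^{int}$ is evaluated at $T_{int}=\mathbb{T}_{int}(e)$. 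I will use these systematically with the argument $e=G/F$, noting that $c_v^{int}$ itself is never differentiated in what follows, which is why no term $(c_v^{int})'$ ever appears.

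For part (1), $D_2H$ is immediate: only the entropy term depends on $G$, and $\partial_G\,s_{int}(G/F)=s_{int}'(G/F)/F=1/(F\mathbb{T}_{int}(G/F))$, giving $D_2H=-1/(R\mathbb{T}_{int}(G/F))$. For $D_1H$ I apply the product and chain rules to $-\frac{F}{R}s_{int}(G/F)$: the derivative of $F\log F$ gives $1+\log F$, while the chain-rule factor $\partial_F(G/F)=-G/F^2$ produces the mixed contribution $\frac{G}{RF\mathbb{T}_{int}(G/F)}$, which combined with $-\frac{s_{int}}{R}(G/F)$ is exactly the stated expression in \eqref{eq-partialH}.

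For part (2) I differentiate the first derivatives once more, reusing $s_{int}'=1/\mathbb{T}_{int}$ and $\mathbb{T}_{int}'=1/c_v^{int}$; the only new ingredient is $\partial\,\mathbb{T}_{int}(G/F)$, which brings a factor $1/c_v^{int}$ together with $-G/F^2$ or $1/F$, reproducing the four entries of $\mathbb{H}$ (in the entry $D_{11}H$ a pair of $O(1/\mathbb{T}_{int})$ terms cancels, leaving the clean form). The two identities \eqref{eq-FDG} are most transparently read as homogeneity statements rather than verified by grinding: $D_2H$ depends on $(F,G)$ only through $G/F$, hence is homogeneous of degree $0$, so Euler's relation $F\partial_F(D_2H)+G\partial_G(D_2H)=0$ is precisely the second identity; similarly $D_1H-\log F$ is homogeneous of degree $0$ while $(F\partial_F+G\partial_G)\log F=1$, which yields the first. (Both also follow by direct substitution of the explicit entries.)

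For part (3) I use these identities to bypass a messy determinant. The second identity gives $D_{12}H=-\frac{G}{F}D_{22}H$, and substituting into the first yields $D_{11}H=\frac1F+\frac{G^2}{F^2}D_{22}H$. The determinant then collapses: $\det\mathbb{H}=D_{11}H\,D_{22}H-(D_{12}H)^2=\frac1F D_{22}H$. Since $F>0$ and $c_v^{int}>0$ (as $e_{int}$ is strictly increasing by assumption), we have $D_{22}H>0$, hence both $D_{11}H>0$ and $\det\mathbb{H}>0$, so $\mathbb{H}$ is positive definite and $H$ is convex. The only genuine obstacle is the bookkeeping in the chain rule through $\mathbb{T}_{int}(G/F)$ and correctly identifying $\mathbb{T}_{int}'=1/c_v^{int}$; once the Hessian and its identities are in hand, the convexity is immediate.
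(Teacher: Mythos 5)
Your proof is correct and follows essentially the same route as the paper's, which simply asserts that points 1 and 2 are direct computations and that the Hessian is positive definite because its trace and determinant are positive (you instead check a diagonal entry and the determinant, which is equivalent for a symmetric $2\times 2$ matrix). Your Euler-homogeneity reading of the identities \eqref{eq-FDG} and the resulting collapse $\det\mathbb{H}=\frac{1}{F}D_{22}H$ are pleasant shortcuts, but they do not constitute a different method.
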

\begin{proof}
Points 1 and 2 are given by direct computations. The Hessian matrix is positive
definite (trace and determinant are positive) so that  $H$ is convex.
\end{proof}
\begin{proposition}[Minimisation of entropy]  \label{minentrop} Let $(F,G)$ be a couple of reduced distributions and $\rho$, $\rho u$, and $\rho e$ its moments as defined by~\eqref{eq-mtsred}. Let ${\cal S}$ be the convex set defined by
\begin{equation*}
    \mathcal{S}=\left\{(F_1,G_1) \text{ such that} \la F_1\ra_{v}=\rho ,\quad
  \la vF_1\ra_{v}=\rho u, \quad 
  \la \frac12|v|^2 F_1+G_1\ra_{v} = \rho e \right\}.
\end{equation*}
\begin{enumerate}
\item The minimum of 
  $\mathcal{H}$ on ${\cal S}$
is obtained for the couple $(M_{int}(F,G), e_{int}(T)M_{int}(F,G))$
with
\begin{equation}  \label{eq-MvibFG}
M_{int}[F,G]=\frac{\rho}{\sqrt{2\pi RT}^3} \exp\left(-\frac{|v - u|^2}{2 RT}\right)  
 \end{equation}
where $e_{int}(T)$ is the equilibrium internal energy defined by ~(\ref{eq-etr}).

\item For every $(F_1,G_1)$ in $\mathcal{S}$, we have
\begin{equation*}
\begin{split}
 0 & \ge  \ds H(M_{int}(F,G), e_{int}(T)M_{int}(F,G)) -H(F_1,G_1) \\
&\ge  D_1H(F,G)(M_{int}(F,G)-F)+D_2H(F,G)(e_{int}(T)M_{int}(F,G)-G)
\end{split}
\end{equation*}
\end{enumerate}

\end{proposition}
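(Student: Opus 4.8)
The plan is to read this as a convex constrained minimisation and to exploit that $H$ is convex (point 3 of Proposition~\ref{entrop}) together with a matching between the thermodynamic derivatives of $H$ at equilibrium, given by~\eqref{eq-partialH}, and the collision invariants defining $\mathcal{S}$. First I would note that $M := M_{int}[F,G]$ depends on $(F,G)$ only through $\rho$, $u$ and $T$, hence only through the moments; since every element of $\mathcal{S}$ shares those moments, $M$ is common to all of $\mathcal{S}$, and by the conservation-properties proposition the couple $(M,e_{int}(T)M)$ itself belongs to $\mathcal{S}$. All the displayed inequalities are to be understood after integration in $v$, i.e. for $\mathcal{H}$ and for the pairings $\la\,\cdot\,\ra_v$, the pointwise convexity of $H$ being only the engine.

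The core computation is to evaluate $D_1H$ and $D_2H$ at the equilibrium couple $(M,e_{int}(T)M)$. There the ratio $G/F$ equals $e_{int}(T)$, and the one-to-one relation $e_{int}=e_{int}(T_{int})$ forces $\mathbb{T}_{int}(e_{int}(T))=T$. Hence $D_2H(M,e_{int}(T)M)=-1/(RT)$ is constant in $v$, while, inserting the Gaussian form of $\log M$ into~\eqref{eq-partialH}, $D_1H(M,e_{int}(T)M)=a+b\cdot v-\tfrac{|v|^2}{2RT}$ for constants $a\in\R$, $b\in\R^3$ depending only on $\rho,u,T$. The key structural fact is that $D_1H$ at equilibrium is a combination of the collision invariants $1,v,|v|^2$ whose $|v|^2$-coefficient is exactly half of the constant $D_2H$.

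Then I would prove Part 1 by the tangent-line inequality for the convex $H$, with the equilibrium couple as base point: for any $(F_1,G_1)$,
\[
H(F_1,G_1)\ge H(M,e_{int}(T)M)+D_1H(M,e_{int}(T)M)(F_1-M)+D_2H(M,e_{int}(T)M)(G_1-e_{int}(T)M).
\]
Integrating in $v$, the first-order term $\la D_1H(M,e_{int}(T)M)(F_1-M)+D_2H(M,e_{int}(T)M)(G_1-e_{int}(T)M)\ra_v$ vanishes: the $a$- and $b\cdot v$-contributions cancel because $F_1$ and $M$ share mass and momentum, and the remaining $-\tfrac1{RT}\la\tfrac12|v|^2(F_1-M)+(G_1-e_{int}(T)M)\ra_v$ cancels because both couples satisfy the energy constraint defining $\mathcal{S}$. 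This gives $\mathcal{H}(F_1,G_1)\ge\mathcal{H}(M,e_{int}(T)M)$ for every $(F_1,G_1)\in\mathcal{S}$, which is Part 1; positive-definiteness of the Hessian yields uniqueness of the minimiser.

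For Part 2, the left inequality is precisely Part 1. For the right inequality I would apply the tangent-line inequality once more, now taking the given couple $(F,G)$ as base point (any element of $\mathcal{S}$ works, since $M$ is common to $\mathcal{S}$) and the equilibrium couple as evaluation point; integrating in $v$ produces the lower bound $\la D_1H(F,G)(M-F)+D_2H(F,G)(e_{int}(T)M-G)\ra_v$. This quantity is exactly the entropy-dissipation functional appearing in the BGK H-theorem, so Part 2 is the quantitative statement that relaxation towards $(M,e_{int}(T)M)$ decreases $\mathcal{H}$. The only real obstacle is the second-paragraph computation together with the vanishing of the linear term: recognising $D_1H$ at equilibrium as a combination of $1,v,|v|^2$ and correctly pairing the constant $D_2H=-1/(RT)$ with the coupled energy invariant $\tfrac12|v|^2F_1+G_1$; everything else is standard convexity.
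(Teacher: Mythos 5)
Your proof is correct, but it takes a genuinely different route from the paper's. The paper proves Part 1 by the \emph{necessary}-condition side of constrained optimisation: it forms a Lagrangian, writes the stationarity conditions $D_1H(F_1,G_1)=\alpha+\beta\cdot v+\gamma\tfrac12|v|^2$ and $D_2H(F_1,G_1)=\gamma$, deduces from the invertibility of $\mathbb{T}_{int}$ that $G_1/F_1$ is constant and that $F_1$ is a Gaussian in $v$, and then fixes the constants with the moment constraints. You instead take the \emph{sufficiency} route: you verify directly that at the announced candidate $(M_{int},e_{int}(T)M_{int})$ the gradient of $H$ lies in the span of the collision invariants (with the $|v|^2$-coefficient of $D_1H$ correctly matched to the constant $D_2H=-1/(RT)$ so as to pair with the coupled invariant $\tfrac12|v|^2F_1+G_1$), so that the first-order term in the convexity inequality integrates to zero over $\mathcal{S}$. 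Your version has the advantage of not presupposing that a minimiser exists and is attained (the paper's stationarity argument only characterises critical points), and it yields uniqueness from strict convexity; the paper's version has the advantage of \emph{deriving} the Maxwellian form rather than guessing it. For Part 2 both arguments are the same one-line use of the tangent inequality at $(F,G)$; note that what this actually establishes is the chain for $(F_1,G_1)=(F,G)$ (which is all that the H-theorem later needs), since for a general element of $\mathcal{S}$ the middle term $\mathcal{H}(M_{int},e_{int}(T)M_{int})-\mathcal{H}(F_1,G_1)$ can be made arbitrarily negative while the right-hand side is fixed --- your parenthetical ``any element of $\mathcal{S}$ works'' glosses over this exactly as the paper does, so it is an imprecision of the statement rather than a gap in your argument.
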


\begin{proof} 

 We now compute the minimum of the reduced entropy. First, the set $\mathcal{S}$ is
clearly convex, and it is non empty, since it is easy to see that
$(M_{int},e_{int}(T)M_{int})$ realises the moments $\rho$, $\rho u$,
and $\rho e$, and hence belongs to $\mathcal{S}$. 
Now, we define the following Lagrangian
\begin{equation*}
\begin{split}
\mathcal {L}(F_1,G_1,\alpha,\beta,\gamma)=& \la
  H(F_1,G_1)\ra_{v}-\alpha(\la F_1\ra_{v}-\rho) \notag\\
 & -\beta\cdot(\la vF_1\ra_{v}-\rho u) -\gamma\left(\la
(\frac12|v|^2)F_1+G_1\ra_{v} - \rho e\right)
\end{split}
\end{equation*}
for $(F_1,G_1)\in\mathcal{S}$ , $\alpha\in \mathbb{R}$, $\beta\in
\mathbb{R}^3$, $\gamma\in \mathbb{R}$. The reduced entropy can reach   a
minimum of ${\mathcal S}$ when $\mathcal {L}$ has its first derivatives equal to zero. This  point, denoted by $(F_1,G_1,\alpha,\beta,\gamma)$
for the moment, is characterised by the fact that the partial
derivatives of $\mathcal {L}$ vanish at
$(F_1,G_1,\alpha,\beta,\gamma)$. This gives the following relations:
\begin{eqnarray}
&&D_1H(F_1,G_1)=\alpha+\beta\cdot v+\gamma \frac12|v|^2, \label{Ent1}\\
&&D_2H(F_1,G_1)=\gamma,  \label{Ent2}\\
&&\la F_1\ra_{v}-\rho=0,\label{Ent3} \\
&&\la v F_1 \ra_{v}-\rho u=0, \label{Ent4}\\
&&\la (\frac12|v|^2)F_1+G_1 \ra_{v}-\rho e=0\label{Ent5},
\end{eqnarray}
where $D_1H$ and $D_2H$ are defined in~(\ref{eq-partialH}).
Combining equations (\ref{Ent1}) and (\ref{Ent2}), one gets that there
exist real numbers $A$, $B$, $D$ and one vector $E\in
\mathbb{R}^3$, independent of $v$, such that:
\begin{eqnarray*}
F_1&=&A\exp\left(E\cdot v +B|v|^2\right ),\\
G_1&=&D F_1,
\end{eqnarray*}
where $B$  is necessarily non positive to ensure the
integrability of $F_1$ and $G_1$. $G/F$ is a constant because the temperature is 
a bijective function of energy so that $D_2H(F_1,G_1)=\gamma$  only owns one solution.
It is then standard to use equations~(\ref{Ent3}) to get
$F_1=M_{int}(F,G)$ and $G_1=e_{int}(T)M_{int}(F,G)$.

Finally point 2 is a direct consequence of the convexity of $H$
and of the minimization property.
\end{proof}

\section{A BGK model for thermally perfect gases}
\label{sec:BGK}
\subsection{A reduced BGK model}
For physics considerations, it is interesting to reduce  complex
kinetic models by using the usual reduced distribution
technique~\cite{HH_1968}. Even for perfect gases, some degenerate models of energy cannot be described easily through equilibrium (extensions of Maxwellians is not very clear when ones deals with partial degrees of freedom of internal energy. In this paper we propose to use two reduced functions $F$ and $G$ that will transport energy. $F$ is transporting the translational energy whereas $G$ transport the reminder of internal energy. More precisely they are defined through:
\begin{equation}
\begin{aligned}
& \p_t F + v\cdot\nabla_x F  =  \frac1 \tau \left( M_{int}[F,G] - F \right) \, , \label{eq:eq_marginales_fint}\\
&  \p_t G + v\cdot\nabla_x G  = \frac1 \tau ( e_{int} M_{int}[F,G] - G ) \, ,   
\end{aligned}
\end{equation}
where the reduced Maxwellian is
\begin{equation*}
M_{int}[F,G]=\frac{\rho}{\sqrt{2\pi RT}^3} \exp\left(-\frac{|v - u|^2}{2 RT}\right),
\end{equation*}

and the macroscopic quantities are defined by

\begin{equation}\label{eq-macro_reduced} 
 \rho = \la F \ra_{v}, \qquad \rho u= \la vF \ra_{v},
 \qquad 
 \rho e= \la (\frac12(v-u)^2)F \ra_{v}+\la G\ra_{v},
\end{equation}

and $T$ is still defined by~\eqref{eq-defT}.

\

It is interesting to compare our new model to the work
of~\cite{hdr,mathiaud2019BGK} and~\cite{KKA_2019}: in these recent papers, the
authors also proposed, independently, BGK and ES-BGK models for temperature
dependent $\delta$, like in the case of vibrational energy. However,
they are not based on an underlying discrete vibrational energy
partition, and the authors are not able to prove any H-theorem. Only
a local entropy dissipation can be proved. The advantage of our 
approach is that the reduced model, which is continuous in energy too,
has got a
H-theorem, as it is shown below.

\subsection{Properties of the reduced model}

System~(\ref{eq:eq_marginales_fint})
naturally satisfies local conservation laws of mass, momentum, and
energy. Moreover, the H-theorem holds with the reduced entropy 
$H(F,G)$ as defined in~(\ref{eq-HFG}). Indeed, we have the 
\begin{proposition}
The reduced BGK
system~(\ref{eq:eq_marginales_fint})
satisfies the H-theorem
\begin{equation*}
\partial_t {\cal H}(F,G) +\divx\la v H(F,G)\ra_{v}\leq 0,
\end{equation*}
where  ${\cal H}(F,G)$ is the reduced entropy defined in~\eqref{eq-HFG}.
\end{proposition}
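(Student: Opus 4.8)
The plan is to differentiate the pointwise entropy density $H(F,G)$ along the free-transport operator, use the two BGK equations to convert the transport derivatives into collision terms, and then invoke the convexity of $H$ (Proposition~\ref{entrop}) together with the minimisation property (Proposition~\ref{minentrop}) to control the sign. Since those two structural facts are already in hand, the theorem itself reduces to a short computation.

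First I would apply the chain rule to $H(F,G)$, which is licit because $H$ is $C^2$ (its Hessian was computed in Proposition~\ref{entrop}). As $F$ and $G$ satisfy transport equations with the \emph{same} left-hand side operator, this gives
\begin{equation*}
\partial_t H(F,G) + v\cdot\nabla_x H(F,G) = D_1H(F,G)\,(\partial_t F + v\cdot\nabla_x F) + D_2H(F,G)\,(\partial_t G + v\cdot\nabla_x G).
\end{equation*}
Substituting the right-hand sides of~\eqref{eq:eq_marginales_fint} and then integrating in $v$, and using that $v$ is independent of $(t,x)$ so that $\la v\cdot\nabla_x H\ra_{v} = \divx\la vH\ra_{v}$, I obtain
\begin{equation*}
\partial_t \mathcal{H}(F,G) + \divx\la vH(F,G)\ra_{v} = \frac{1}{\tau}\la D_1H(F,G)\,(M_{int}[F,G]-F) + D_2H(F,G)\,(e_{int}M_{int}[F,G]-G)\ra_{v}.
\end{equation*}
Since $\tau>0$, everything now rests on showing that the integrand on the right is nonpositive after integration.

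This is where I would use convexity in the correct orientation. The tangent-plane inequality for the convex function $H$, anchored at the point $(F,G)$ and evaluated at $(M_{int},e_{int}M_{int})$, reads pointwise
\begin{equation*}
D_1H(F,G)\,(M_{int}-F) + D_2H(F,G)\,(e_{int}M_{int}-G) \le H(M_{int},e_{int}M_{int}) - H(F,G),
\end{equation*}
so that after integration the collision contribution is bounded above by $\mathcal{H}(M_{int},e_{int}M_{int}) - \mathcal{H}(F,G)$. Finally, because $M_{int}[F,G]$ is built to share the moments $\rho,\rho u,\rho e$ of $(F,G)$, both $(F,G)$ and $(M_{int},e_{int}M_{int})$ lie in the convex set $\mathcal{S}$, and Proposition~\ref{minentrop} states that the equilibrium minimises $\mathcal{H}$ over $\mathcal{S}$; hence $\mathcal{H}(M_{int},e_{int}M_{int}) - \mathcal{H}(F,G)\le 0$, which closes the argument. (This two-sided bound is exactly point~2 of Proposition~\ref{minentrop} applied to $(F_1,G_1)=(F,G)$.)

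I do not expect a genuine obstacle, since the heavy lifting sits in the already-proven convexity and minimisation results. The one point deserving care is conceptual rather than computational: the entropy gradient $(D_1H,D_2H)$ is evaluated at $(F,G)$ and \emph{not} at the equilibrium, so convexity alone only yields an upper bound by the entropy \emph{difference}; it is the combination with the minimisation property that forces this difference, and therefore the dissipation, to be nonpositive. A secondary bookkeeping check is that $D_1H,D_2H$ depend on $G/F$ through $\mathbb{T}_{int}(G/F)$, so one should keep $F>0$ and $G/F$ in the admissible range throughout for the chain rule to be valid.
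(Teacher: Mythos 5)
Your proof is correct and follows essentially the same route as the paper: differentiate the entropy along the flow, replace the transport terms by the relaxation terms using~\eqref{eq:eq_marginales_fint}, and conclude by the sign of the dissipation via point~2 of Proposition~\ref{minentrop}. The only difference is that you unpack that point into its two ingredients (the tangent-plane inequality for the convex $H$ anchored at $(F,G)$, plus the minimisation of $\mathcal{H}$ over $\mathcal{S}$), which the paper simply cites as a single established fact.
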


\begin{proof}

\

By differentiation we get
\begin{equation*}
\begin{split}
& \partial_t{\cal H}(F,G) +\divx\la v H(F,G)\ra_{v} \\
&  =  
\la  D_1H(F,G)(\partial_t F + v\nabla_x F)  
  +  D_2H(F,G)(\partial_t G + v\nabla_x G)
\ra_{v}  \\
&  =  
 \frac1 \tau \la  D_1H(F,G) ( M_{int}[F,G] - F )
  +  D_2H(F,G)(\frac {\delta(T)} 2 RT M_{int}[F,G] - G )
\ra_{v}  \\
&  \leq  0
\end{split}
\end{equation*}
where we have
used~(\ref{eq:eq_marginales_fint}) to
replace the transport terms by relaxation ones, and point 2 of
proposition~\ref{minentrop} to obtain the inequality.
\end{proof}

\section{A  Fokker-Planck model for thermally perfect gases}
\label{sec:FP}

It is difficult to derive a Fokker-Planck model for the distribution
function $f$ with discrete energy levels. We find it easier to
directly derive
a reduced model, by analogy with the reduced BGK
model~(\ref{eq:eq_marginales_fint}) and by
using our previous work~\cite{Mathiaud2017} on
a Fokker-Planck model for polyatomic gases. {We remind that the original Fokker-Planck model for monoatomic gas can be derived from
the Boltzmann collision operator under the assumption of small
velocity changes through collisions and additional equilibrium
assumptions (see~\cite{cercignani}). In practice, the agreement
of this model with the Boltzmann equation is observed even when the
gas is far from equilibrium (see~\cite{Grj2011}, for instance).}

\subsection{A  reduced Fokker-Planck model}

By analogy, now we propose the following reduced Fokker-Planck model for a
diatomic gas with vibrations. Note that now, the model is still with
variables $x$, $v$, and $\eps$: only the discrete energy levels $i$
are eliminated. 
This model is 
\begin{equation}
\begin{aligned}
&  \p_t F + v\cdot\nabla_x F  =  D_F(F,G),\label{eq-FPF}  \\
&  \p_t G + v\cdot\nabla_x G  =  D_G(F,G) ,     
\end{aligned}
\end{equation}
with
\begin{equation}\label{eq-DFDG}   
\begin{split}
& D_F(F,G)=\frac{1}{\tau}\left(\divv\bigl((v-u)F+RT\nabla_v
  F\bigr)\right),\\
& D_G(F,G)= \frac{1}{\tau}\left(\divv\bigl((v-u)G+RT\nabla_v
  G\bigr)\right)
 +\frac2{\tau}\left(e_{int}(T)F-G\right),
\end{split}
\end{equation}
where the macroscopic values are defined as
in~(\ref{eq-macro_reduced}) and ~\eqref{eq-defT}.

\subsection{Properties of the reduced model}

Using direct calculations and dissipation properties as in \cite{Mathiaud2017} we can prove the following proposition.
\begin{proposition} \label{prop:FP}
The collision operator conserves the mass, momentum, and energy:
\begin{equation*}
  \la(1,v)D_F(F,G)\ra_{v} = 0 \quad \text{ and } \quad
\la(\frac12 |v|^2)D_F(F,G) + D_G(F,G)\ra_{v} = 0,
\end{equation*}
the reduced entropy $\mathcal{H}(F,G)$ satisfies the H-theorem:
\begin{equation*}
\partial_t \mathcal{H}(F,G) +\divx\la v H(F,G)\ra_{v}=
  \mathcal{D}(F,G)\leq 0,
\end{equation*}
and we have the equilibrium property
\begin{equation*}
  (D_F(F,G) = 0 \text{ and } D_G(F,G) = 0) \Leftrightarrow
(F = M_{int}[F,G]  \text{ and }  G = e_{int}(T)M_{int}[F,G] ).
\end{equation*}

\end{proposition}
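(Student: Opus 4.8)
The plan is to verify the three claims of Proposition~\ref{prop:FP} separately, treating conservation, the H-theorem, and the equilibrium characterization in turn, since each relies on a different mechanism.

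\textbf{Conservation.} First I would check the conservation laws by integrating the operators in $v$. For $D_F$, the collision term is a pure divergence in $v$ of the form $\frac1\tau\divv\bigl((v-u)F+RT\nabla_v F\bigr)$, so $\la D_F\ra_v=0$ follows immediately from the divergence theorem (assuming $F$ and its gradient vanish at infinity). For momentum, $\la v D_F\ra_v$ requires one integration by parts: the boundary terms vanish and the remaining integral reduces to $-\frac1\tau\la (v-u)F+RT\nabla_vF\ra_v$, whose two pieces are $-\frac1\tau\bigl(\rho u-\rho u\bigr)$ and a term that integrates to zero, giving $0$. For energy, I would compute $\la\frac12|v|^2 D_F+D_G\ra_v$: the two Fokker--Planck drift-diffusion pieces are again handled by integration by parts, while the extra relaxation term $\frac2\tau\la e_{int}(T)F-G\ra_v=\frac2\tau\bigl(\rho e_{int}-\rho e_{int}\bigr)=0$ vanishes by the definition~\eqref{eq-macro_reduced} of $\rho e_{int}=\la G\ra_v$ and the fact that $e_{int}(T)$ is a $v$-independent constant. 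The $\frac12|v|^2$ weight against the $F$-diffusion produces precisely the translational energy balance, which cancels as in the monatomic Fokker--Planck case.

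\textbf{H-theorem.} Next I would differentiate $\mathcal H(F,G)$ along the flow exactly as in the BGK proof, obtaining $\partial_t\mathcal H+\divx\la vH\ra_v=\la D_1H\,D_F+D_2H\,D_G\ra_v$ using~\eqref{eq-FPF}. Substituting~\eqref{eq-DFDG} and~\eqref{eq-partialH}, I would integrate by parts to move the velocity-divergences onto the derivatives of $H$. The key simplification is to use the identities~\eqref{eq-FDG}, namely $FD_{11}H+GD_{21}H=1$ and $FD_{12}H+GD_{22}H=0$, together with the explicit Hessian, to combine the drift-diffusion contributions into a manifestly nonpositive quadratic form in $\nabla_v F$ and $\nabla_v G$. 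Concretely I expect the dissipation to take the shape $\mathcal D=-\frac1\tau\la RT\,(\nabla_v F,\nabla_v G)\,\mathbb H\,(\nabla_v F,\nabla_v G)^{\!\top}\ra_v$ plus a relaxation contribution $-\frac2\tau\la D_2H\,(G-e_{int}(T)F)\ra_v$; convexity of $H$ (point~3 of Proposition~\ref{entrop}, i.e.\ $\mathbb H\ge0$) makes the quadratic form nonpositive, and the relaxation term is nonpositive because $D_2H=-\frac1{R\mathbb T_{int}(G/F)}<0$ while $G-e_{int}(T)F$ has the matching sign through the monotonicity of $\mathbb T_{int}$.

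\textbf{Equilibrium.} Finally, for the equivalence I would argue both directions. If $F=M_{int}[F,G]$ and $G=e_{int}(T)M_{int}[F,G]$, then $(v-u)F+RT\nabla_vF=0$ by direct differentiation of the Gaussian, so $D_F=0$; likewise the diffusion part of $D_G$ vanishes and the relaxation part $\frac2\tau(e_{int}(T)F-G)=0$ by the assumed form of $G$, giving $D_G=0$. For the converse I would use the dissipation computed above: $D_F=D_G=0$ forces $\mathcal D=0$, and since $\mathcal D$ is a sum of nonpositive terms each must vanish; the vanishing of the quadratic form (by strict convexity, $\mathbb H>0$) forces $\nabla_v F$ and $\nabla_v G$ to be constrained so that $F$ is Gaussian with $G/F$ constant, while vanishing of the relaxation term forces $G=e_{int}(T)F$. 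I expect the main obstacle to be precisely the H-theorem step: assembling the integration-by-parts terms into the clean quadratic form requires careful bookkeeping of the cross terms, and one must verify that the relaxation term $\frac2\tau(e_{int}(T)F-G)$ pairs correctly with $D_2H$ to retain the right sign, using the convexity-based inequality $(e_{int}(T)F-G)\bigl(\mathbb T_{int}(e_{int})-\mathbb T_{int}(G/F)\bigr)\le0$ rather than a naive cancellation.
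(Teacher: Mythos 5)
Your conservation argument and the forward direction of the equilibrium property are fine and match the paper. The genuine gap is in the H-theorem step, which you correctly identify as the crux but whose structure you predict incorrectly. After integrating by parts, the drift--diffusion contribution is the \emph{mixed} bilinear form $-\la (\nabla_v F,\nabla_v G)\,\mathbb H\,(\Phi_F,\Phi_G)^{\top}\ra_v$ with $\Phi_F=(v-u)F+RT\nabla_v F$ and $\Phi_G=(v-u)G+RT\nabla_v G$, not a quadratic form in the gradients alone; it is not sign-definite as it stands. The paper's key move is to write $(\nabla_vF,\nabla_vG)=\frac{1}{RT}(\Phi_F,\Phi_G)-\frac{v-u}{RT}(F,G)$ and invoke the identities~\eqref{eq-FDG}, which turns this into a manifestly nonpositive quadratic form \emph{in the fluxes} $(\Phi_F,\Phi_G)$ \emph{plus} a leftover term $\la\frac{v-u}{RT}\cdot\Phi_F\ra_v=\frac{2}{RT}\la e_{int}(T)F-G\ra_v$. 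That leftover term is absent from your ``expected shape'' of $\mathcal D$, and it is essential: the relaxation contribution $-\frac{2}{\tau}\la (e_{int}(T)F-G)\frac{1}{R\mathbb T_{int}(G/F)}\ra_v$ is \emph{not} nonpositive on its own (contrary to your claim that $D_2H<0$ and $G-e_{int}(T)F$ ``has the matching sign''---pointwise $G-e_{int}(T)F$ can have either sign). Only after adding the leftover term does one get the pointwise-nonpositive integrand $2(e_{int}(T)F-G)\bigl(\frac{1}{RT}-\frac{1}{R\mathbb T_{int}(G/F)}\bigr)$, nonpositive because $G/F\le e_{int}(T)\Leftrightarrow \mathbb T_{int}(G/F)\le T\Leftrightarrow \frac{1}{RT}\le\frac{1}{R\mathbb T_{int}(G/F)}$. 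Note also that the inequality you propose in your last sentence, $(e_{int}(T)F-G)\bigl(\mathbb T_{int}(e_{int})-\mathbb T_{int}(G/F)\bigr)\le 0$, is false --- that product is $\ge 0$; the sign flips only because the pairing is with \emph{reciprocal} temperatures.

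This error propagates to your converse equilibrium argument: if the quadratic form were in $(\nabla_vF,\nabla_vG)$, its vanishing would force $F$ and $G$ constant in $v$, which is absurd; with the correct form in $(\Phi_F,\Phi_G)$, vanishing forces $F$ and $G$ proportional to $M_{int}$, and the moment constraints then pin down the constants. With that repair your route (deduce equilibrium from $\mathcal D=0$ and strict convexity) is a legitimate alternative to the paper's argument, which instead tests $D_F$ against $F/M_{int}$ and $D_G$ against $G/(e_{int}M_{int})$ directly and never passes through the entropy; the paper's version has the advantage of being independent of the (longer) H-theorem computation.
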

\begin{proof}
The conservation property is the consequence of direct integration
of~\eqref{eq-DFDG}. The equilibrium property can be proved as
follows. 
 To shorten the notations, $M_{int}[F,G]$ will be simply
denoted by $M_{int}$ below, and $e_{int}(T)$ will be simply denoted by
$e_{int}$ as well. Then the collision operators can be written in the compact form
\begin{equation*}
\begin{split}
& D_F(F,G)=\frac{1}{\tau}\nabla_{v} \cdot
\left(  M_{int} \nabla_{v} \frac{F}{M_{int} }  \right),\\
& D_G(F,G)= \frac{1}{\tau}\nabla_{v} \cdot
\left(  M_{int} \nabla_{v} \frac{G}{M_{int} }  \right)
 +\frac2{\tau}\left(e_{int}F-G\right).
\end{split}
\end{equation*}
Then an integration by part gives the following identity for
$D_F(F,G)$: 
\begin{equation*}
  \la D_F(F,G) \frac{F}{M_{int}}\ra_{v} =  
- \frac{1}{\tau}\la \left(\nabla_{v} \frac{F}{M_{int} } \right)^T M_{int} \nabla_{v} \frac{F}{M_{int} } \ra_{v}.
\end{equation*}
Consequently, if $D_F(F,G)=0$, 
since the integrand in the previous relation is a definite positive
form, the gradient is necessarily zero, and hence
$F=M_{int}$. 
For the equilibrium property of $G$, the proof is a bit more
complicated. First, we have 
\begin{equation*}
  \la D_G(F,G) \frac{G}{e_{int}M_{int}}\ra_{v} =  
- \frac{1}{\tau e_{int}}\la \left(\nabla_{v}
\frac{G}{M_{int} }\right)^T  M_{int} \nabla_{v}
\frac{G}{M_{int} } \ra_{v} 
+\la \frac2{\tau}\left(e_{int}F-G\right)\frac{G}{e_{int}M_{int}}\ra_{v}.
\end{equation*}
Consequently, if $D_G(F,G)=0$, and since $F=M_{int}$, we have
\begin{equation*}
\begin{split}
  \frac{1}{e_{int}}\la \left(\nabla_{v}
\frac{G}{M_{int} }\right)^T  M_{int} \nabla_{v}
\frac{G}{M_{int} } \ra_{v}  
& = \frac{2}{\tau}
\la\left(e_{int}M_{int}-G\right)\frac{G}{e_{int}M_{int}}\ra_{v}
\\
&= - \frac{2}{\tau}
\la\left(e_{int}M_{int}-G\right)^2
      \frac{1}{e_{int}M_{int}}
\ra_{v}
+  \frac{2}{\tau}\la  e_{int}M_{int}-G\ra_{v} \\
& \leq \frac{2}{\tau}\la  e_{int}M_{int}-G\ra_{v} 
 = \frac{2}{\tau} (\rho e_{int}-\la G\ra_{v}) = 0,
\end{split}
\end{equation*}
which comes from~\eqref{eq-mtsred} and $F=M_{int}$. Therefore, we
obtain
\begin{equation*}
   \frac{1}{e_{int}}\la \left(\nabla_{v}
\frac{G}{M_{int} }\right)^T M_{int} \nabla_{v}
\frac{G}{M_{int} } \ra_{v}  \leq 0,
\end{equation*}
and again this gives $G = e_{int}M_{int}$,
which concludes the proof of the equilibrium property.

The proof of the H-theorem is much longer. First, by differentiation one gets that the quantity $\mathcal{D}(F,G)=\partial_t \mathcal{H}(F,G) +\divx\la vH(F,G)\ra_{v}$  satisfies:
\begin{eqnarray}
\mathcal{D}(F,G)&=&\la D_1H(F,G)(\partial_tF +v\cdot\nabla_x F )
                      +D_2H(F,G)(\partial_tG +v\cdot\nabla_x G)\ra_{v}\notag\\
&=& \la D_1H(F,G)D_F(F,G)+D_2H(F,G)D_G(F,G)\ra_{v} ,
\end{eqnarray}
from~(\ref{eq:eq_marginales_fint}).
Then the  proof is based on the convexity of $H(F,G)$: while for the BGK we
only used the the first derivatives of $H$, we now use the
positive-definiteness of the Hessian matrix of $H$. To do so we
integrate by parts $\mathcal{D}(F,G)$ and multiply by $\tau$ so that:
\begin{eqnarray*}
\tau\mathcal{D}(F,G)&=&-\sum_{i=1}^3\la\partial_{v_i}(F)D_{11}H(F,G)\left(F(v_i-u_i)+RT\partial_{v_i} F\right)\ra_{v}\\
&&-\sum_{i=1}^3\la\partial_{v_i}(G)D_{21}H(F,G)\left(F(v_i-u_i)+RT\partial_{v_i} F\right)\ra_{v}\\
&&-\sum_{i=1}^3\la\partial_{v_i}(F)D_{12}H(F,G)\left(G(v_i-u_i)+RT\partial_{v_i} G\right)\ra_{v}\\
&&-\sum_{i=1}^3\la\partial_{v_i}(G)D_{22}H(F,G)\left(G(v_i-u_i)+RT\partial_{v_i} G\right)\ra_{v}\\
&&-2\la(e_{int}(T)F-G)\frac 1{RT(G/F)}\ra_{v}
\end{eqnarray*}
To use the positive definiteness of the Hessian matrix $\mathbb{H}$ of
$H$, we introduce the following vector:
\begin{align*}
&   V_i=(F(v_i-u_i)+RT\partial_{v_i} F,G(v_i-u_i)+RT\partial_{v_i} G)    
\end{align*}
and we decompose the partial derivatives of $F$ and $G$ in factor of $D_{11}F$,
$D_{22}F$, $D_{12}F$ as follows:
\begin{align*}
& (\partial_{v_i}(F),\partial_{v_i}(G)) = \frac{1}{RT}V_i
  -(F\frac{v_i-u_i}{RT},G\frac{v_i-u_i}{RT}).
\end{align*}
This gives
\begin{eqnarray*}
\tau\mathcal{D}(F,G)&=&\sum_{i=1}^3\la\left(F\frac{v_i-u_i}{RT}\right)D_{11}H(F,G)\left(F(v_i-u_i)+RT\partial_{v_i} F\right)\ra_{v}\\
&&+\sum_{i=1}^3\la\left(G\frac{v_i-u_i}{RT}\right)D_{21}H(F,G)\left(F(v_i-u_i)+RT\partial_{v_i} F\right)\ra_{v}\\
&&+\sum_{i=1}^3\la\left(F\frac{v_i-u_i}{RT}\right) D_{12}H(F,G)\left(G(v_i-u_i)+RT\partial_{v_i} G\right)\ra_{v}\\
&&+\sum_{i=1}^3\la\left(G\frac{v_i-u_i}{RT}\right)D_{22}H(F,G)\left(G(v_i-u_i)+RT\partial_{v_i} G\right)\ra_{v}\\
&&- \sum_{i=1}^3\la V_i^T\mathbb{H}V_i\ra_{v}\\
&&-2\la(e_{int}(T)F-G)\frac 1{RT(G/F)}\ra_{v}
\end{eqnarray*}

Now this expression can be considerably simplified by using
property~(\ref{eq-FDG}), and we get
\begin{eqnarray*}
\tau\mathcal{D}(F,G)&=&\sum_{i=1}^3\la\left(\frac{v_i-u_i}{RT}\right)\left(F(v_i-u_i)+RT\partial_{v_i} F\right)\ra_{v}\\
&&-\sum_{i=1}^3 V_i^t\mathbb{H}V_i-2\la(e_{int}(T)F-G)\frac 1{RT(G/F)}\ra_{v}.
\end{eqnarray*}
Then the first two terms are simplified by using an integration by
parts and relations~(\ref{eq-mtsred}) and~(\ref{eq-defT}) to get
\begin{eqnarray*}
\tau\mathcal{D}(F,G)&=& \frac{2}{RT}(\rho e_{int}(T) - \la G\ra_{v})-\sum_{i=1}^3 V_i^t\mathbb{H}V_i-2\la(e_{int}(T)F-G)\frac 1{RT(G/F)}\ra_{v}.
\end{eqnarray*}
The terms with the Hessian are clearly negative, since $\mathbb{H}$ is
positive definite. Then we have
\begin{eqnarray*}
\tau\mathcal{D}(F,G)&\leq& \frac{2}{RT}(\rho e_{int}(T) - \la G\ra_{v})-2\la(e_{int}(T)F-G)\frac 1{RT(G/F)}\ra_{v}.
\end{eqnarray*}
Note that from~\eqref{eq-mtsred} the first term can be written as
\begin{equation*}
  \frac{2}{RT}(\rho e_{int}(T) - \la G\ra_{v})  =
  \frac{2}{RT}\la e_{int}(T) F-G\ra_{v}, 
\end{equation*}
and can be factorised with the second term to find
\begin{equation*}
\tau\mathcal{D}(F,G)\leq
2\la(e_{int}(T)F-G)\left( \frac{1}{RT}-\frac 1{RT(G/F)}\right)\ra_{v}.
\end{equation*}
We can now prove that the integrand of the right-hand side is non-positive. Indeed,
assume for instance that the second factor is non-positive, that is to
say $\ds \frac{1}{RT}-\frac 1{RT(G/F)}\leq 0$. Since $e_{int}$ is an increasing function of temperature (see definition~(\ref{eq-etr})), it is now very easy to prove the following relations:
\begin{equation*}
 \frac{1}{RT}-\frac 1{RT(G/F)}\leq 0  \Leftrightarrow   \frac G F\leq e_{int}(T)
\end{equation*}
that is to say the first factor of the integrand is non-negative. 
Consequently, we have proved $\tau\mathcal{D}(F,G)\leq 0$, 
which concludes the proof. 
\end{proof}

\section{Hydrodynamic limits for reduced models}
\label{sec:CE}

With a convenient scaling, the relaxation time $\tau$ of the reduced BGK
model~\eqref{eq:eq_marginales_fint} and
the Fokker-Planck model~(\ref{eq-FPF})) is replaced by $\Kn \tau$, 
where $\Kn$ is the Knudsen number, which can be defined as a ratio
between the mean free path and a macroscopic length scale.
It is then possible to look for macroscopic models derived from BGK and
Fokker-Planck reduced models, in the asymptotic limit of small Knudsen
numbers.  For convenience, these models are re-written below in non-dimensional
form. The BGK model is:
\begin{align}
& \p_t F + v\cdot\nabla_x F  =  \frac{1}{\Kn\tau} \left( M_{int}[F,G] - F \right) \, , \label{eq:BGKf}\\
&  \p_t G + v\cdot\nabla_x G  = \frac{1}{\Kn\tau}  (e_{int}(T) M_{int}[F,G] - G ) \, , \label{eq:BGKg}    
\end{align}
where $M_{int}[F,G]$ can be defined by~(\ref{eq-MvibFG}) with
$R=1$. Similarly, the relations~\eqref{eq-etr}--\eqref{eq-defT}
between the translational, internal, and total energies and the
temperature, have to be read with $R=1$ in  non-dimensional variables. The Fokker-Planck model is
\begin{align}
&  \p_t F + v\cdot\nabla_x F  =  D_F(F,G),\label{eq-FPf}  \\
&  \p_t G + v\cdot\nabla_x G  =  D_G(F,G) \label{eq-FPg},     
\end{align}
with
\begin{equation}\label{eq-DFDGadim}   
\begin{split}
& D_F(F,G)=\frac{1}{\Kn\tau}\left(\divv\bigl((v-u)F+T\nabla_v
  F\bigr)\right),\\
& D_G(F,G)= \frac{1}{\Kn\tau}\left(\divv\bigl((v-u)G+T\nabla_v
  G\bigr)\right)
 +\frac2{\Kn\tau}\left(e_{int}(T)F-G\right).
\end{split}
\end{equation}

\subsection{Euler limit}

In this section, we compute the Euler limit of the two models:
\begin{proposition}
  The mass, momentum, and energy densities $(\rho, \rho u, E = \frac12 \rho
u^2 + \rho e)$ of the solutions of the
  reduced BGK and the Fokker-Planck models satisfy the equations
\begin{equation}\label{eq-euler} 
\begin{split}
& \dt \rho + \nabla_x \cdot \rho u = 0, \\
& \dt \rho u + \nabla_x\cdot  (\rho u\otimes u) + \nabla p = O(\Kn), \\
& \dt E + \nabla_x \cdot (E+p)u =O(\Kn),
\end{split}
\end{equation}
which are the Euler equations, up to $O(\Kn)$. The non-conservative
form of these equations is
\begin{equation} \label{eq-euler_non_cons}
\begin{split}  & \dt \rho + \nabla_x \cdot \rho u = 0, \\
& \rho (\dt u + (u \cdot \nabla_x) u) + \nabla p = O(\Kn), \\
&  \dt T + u\cdot \nabla_x T+\frac{T}{c_v(T)}\divx u =O(\Kn),
\end{split}
\end{equation}
where $c_v(T)=\frac{d}{dT}e(T)$ is the heat capacity at constant volume.
\end{proposition}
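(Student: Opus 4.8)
The plan is to use the classical moment method: take velocity moments of the two kinetic equations against the collision invariants, exploit the conservation properties of the collision operators to eliminate the stiff $1/\Kn$ terms, and then close the resulting system by replacing $F$ and $G$ by their local equilibria to leading order in $\Kn$. First I would multiply the $F$-equation by $1$, $v$ and $\frac12|v|^2$, multiply the $G$-equation by $1$, and integrate in $v$; summing the $\frac12|v|^2$-moment of the $F$-equation with the zeroth moment of the $G$-equation reconstructs the total energy. For both models the right-hand sides integrate to zero against these moments: for the BGK model this is the equilibrium \textbf{Conservation properties} proposition (giving $\la M_{int}[F,G]\ra_v=\rho$, $\la vM_{int}[F,G]\ra_v=\rho u$ and the matching energy identity), and for the Fokker--Planck model it is precisely the conservation statement of Proposition~\ref{prop:FP}. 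This step is \emph{exact} and produces the unclosed system
\begin{equation*}
\begin{split}
& \dt\rho + \divx(\rho u) = 0,\\
& \dt(\rho u) + \divx\big(\rho u\otimes u + \Sigma\big) = 0,\\
& \dt E + \divx\la \tfrac12|v|^2 v\, F + v\, G\ra_v = 0,
\end{split}
\end{equation*}
where $\Sigma=\la(v-u)\otimes(v-u)F\ra_v$ is the pressure tensor, which together with the energy flux still depends on the full distributions.

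The closure comes from the limit $\Kn\to 0$. Since the collision operators are $O(1/\Kn)$, the leading balance forces $F = M_{int}[F,G] + O(\Kn)$ and $G = e_{int}(T)M_{int}[F,G] + O(\Kn)$: for BGK this is read directly off~\eqref{eq:BGKf}--\eqref{eq:BGKg} by solving for $F$ and $G$, while for Fokker--Planck it follows from the equilibrium characterisation of Proposition~\ref{prop:FP}, as only the equilibria annihilate $D_F$ and $D_G$. I would then evaluate the flux moments at this local equilibrium using elementary Gaussian integrals: because $M_{int}[F,G]$ is a Maxwellian at temperature $T$, one gets $\la(v-u)\otimes(v-u)M_{int}\ra_v = \rho RT\,I = pI$ with $p=\rho RT$, all odd moments vanish so the heat flux is $O(\Kn)$, and $\la\frac12|v|^2 v M_{int}\ra_v + e_{int}(T)\la v M_{int}\ra_v = (\frac12\rho u^2 + \frac52 p + \rho e_{int})u = (E+p)u$. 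The point specific to thermally perfect gases is that the perfect-gas law $p=\rho RT$ and the relation $e_{tr}=\frac32 RT$ from~\eqref{eq-etr} hold regardless of the internal energy law, so the translational part supplies the $\frac52 p$ pressure-work term while the internal energy simply rides along inside $(E+p)u$. Substituting $\Sigma=pI+O(\Kn)$ and the vanishing heat flux yields~\eqref{eq-euler}.

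Finally, the non-conservative form~\eqref{eq-euler_non_cons} is obtained by purely algebraic manipulation. The momentum line follows by expanding $\divx(\rho u\otimes u)$ and cancelling with mass conservation. For the temperature equation I would subtract the mechanical energy balance (the momentum equation dotted with $u$) from the total energy equation to isolate the internal energy balance $\rho(\dt e + u\cdot\nabla_x e) + p\divx u = O(\Kn)$, then convert $e$ to $T$ via $de=c_v(T)\,dT$ from~\eqref{eq-defT}, divide by $\rho c_v(T)$, and use $p/\rho=RT$ to recover the coefficient $T/c_v(T)$ in front of $\divx u$.

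I expect the main obstacle to be bookkeeping the $O(\Kn)$ remainders consistently rather than any single hard computation. In particular, the delicate point is to confirm that the moment system above is genuinely exact, so that all error is carried solely by the equilibrium corrections to $\Sigma$ and to the heat flux, and that for the Fokker--Planck model the differential collision operator really does enforce convergence to the \emph{same} local equilibrium $M_{int}[F,G]$ as in the BGK case; once this is granted, the Gaussian moment identities and the conservation properties do all the real work.
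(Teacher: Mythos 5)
Your proposal is correct and follows essentially the same route as the paper: take moments against the collision invariants (which is exact by the conservation properties), close with $F = M_{int}[F,G] + O(\Kn)$ and $G = e_{int}(T)M_{int}[F,G] + O(\Kn)$ forced by the $O(1/\Kn)$ relaxation, evaluate the stress and heat flux at equilibrium via Gaussian integrals, and pass to the non-conservative form algebraically. Your write-up is in fact somewhat more explicit than the paper's (notably the computation of the equilibrium energy flux $(E+p)u$ and the derivation of the temperature equation via $de=c_v\,dT$), but the underlying argument is identical.
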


\begin{proof}

\

The reduced BGK
  model~\eqref{eq:eq_marginales_fint}
is multiplied by $1$, $v$, and $\frac12 |v|^2$ and integrated
with respect to $v$, which gives the following conservation
laws (with $\sigma(F) = \la F(v-u)\otimes (v-u)  \ra_{v}$  the stress
tensor, and $q(F,G) =  \la  \left(F(\frac12 |v-u|^2)+G\right) (v-u) \ra_{v}
 $ the heat flux):
\begin{equation*} 
\begin{split}
  & \dt \rho + \nabla_x \cdot \rho u = 0, \\
& \dt \rho u + \nabla_x\cdot  (\rho u\otimes u) + \nabla_x \sigma(F) = 0, \\
& \dt E + \nabla_x \cdot Eu + \nabla_x \cdot \sigma(F)   u + \nabla_x \cdot q(F,G)=0.
\end{split}
\end{equation*}

When $\Kn$ is very small, if all the time and space derivatives of $F$
and $G$ are $O(1)$ with respect to $\Kn$,
then~(\ref{eq:BGKf})--(\ref{eq:BGKg})  imply
$F = M_{int}[F,G] + O(\Kn)$ and $G = e_{int}(T)
M_{int}[F,G] +O(\Kn)$ so that $\sigma(F) =
\sigma(M_{int}[F,G]) + O(\Kn) = p I + O(\Kn)$ ,
where $I$ is the unit tensor, and $q(F,G) =
q(M_{int}[F,G],e_{int}(T)M_{int}[F,G])+ O(\Kn) = O(\Kn)$,
which gives the Euler equations~(\ref{eq-euler_non_cons}).
The same analysis can be applied for the reduced Fokker-Planck model~(\ref{eq-FPf})--(\ref{eq-DFDGadim}).
Finally, the non conservative form is readily obtained from the
conservative form. We also get from the non conservative temperature equation: 
\begin{eqnarray}
 \dt e_{int}(T)+ u\cdot \nabla_x
 e_{int}(T)+T\frac{c_v^{int}}{c_v}\divx u =O(\Kn).
\end{eqnarray}
\end{proof}

\subsection{Compressible Navier-Stokes limit}
In this section, we shall prove the following proposition:
\begin{proposition}
  The moments of the solution of the BGK and Fokker-Planck kinetic
  models~(\eqref{eq:eq_marginales_fint})
  and~\eqref{eq-FPF} satisfy,
  up to $O(\Kn^2)$, the Navier-Stokes equations
\begin{equation}\label{eq-ns}
\begin{split}
& \dt \rho + \nabla \cdot \rho u = 0, \\
& \dt \rho u + \nabla\cdot  (\rho u\otimes u) + \nabla p = -\nabla
\cdot \sigma, \\
& \dt E + \nabla \cdot (E+p)u = -\nabla\cdot q - \nabla\cdot(\sigma u),
\end{split}
\end{equation}
where the shear stress tensor and the heat flux are given by
\begin{equation}  \label{eq-fluxes_ns}
\sigma = -\mu \bigl(\nabla u + (\nabla u)^T -\alpha\nabla\cdot
u\bigr), \quad \text{and} \quad  q=-\kappa \nabla \cdot T,
\end{equation}
and  the  values of the viscosity and heat
transfer coefficients (in dimensional variables) are:
\begin{equation}  
\begin{split}
& \mu = \tau p, \quad \text{and} \quad
\kappa = \mu c_p(T) \quad \text{for BGK}, \\
& \mu = \frac{1}{2}\tau p, \quad \text{and} \quad
\kappa = \frac{2}{3}\mu c_p(T) \quad \text{for Fokker-Planck},
\end{split}
\end{equation}
 while the volume viscosity coefficient is $
 \alpha=\frac{c_p(T)}{c_v(T)}-1 $ for both models, and  $c_p(T)=\frac{d}{dT}(e(T)+p/\rho) = c_v(T) +R$ is the heat capacity at constant pressure.
Moreover,
the corresponding Prandtl number is
\begin{equation}  \label{eq-PrM}
\Pr = \frac{\mu c_p(T)}{\kappa}=1 \quad \text{for BGK}, \quad \text{and} \quad \frac{3}{2}
\quad \text{for Fokker-Planck}.
\end{equation}

\end{proposition}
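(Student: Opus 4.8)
The plan is to perform a Chapman--Enskog expansion of the two reduced models in powers of $\Kn$. The starting point is the exact but unclosed set of conservation laws obtained in the proof of the Euler limit: multiplying~\eqref{eq:BGKf} (resp.~\eqref{eq-FPf}) by $1$, $v$, $\frac12|v|^2$ and~\eqref{eq:BGKg} (resp.~\eqref{eq-FPg}) by $1$, then integrating in $v$, produces the balance laws for $(\rho,\rho u,E)$ in which the stress tensor $\sigma(F)=\la F(v-u)\otimes(v-u)\ra_{v}$ and the heat flux $q(F,G)=\la (F\frac12|v-u|^2+G)(v-u)\ra_{v}$ appear. I would write $F=M_{int}[F,G]+\Kn\,F_1+O(\Kn^2)$ and $G=e_{int}(T)M_{int}[F,G]+\Kn\,G_1+O(\Kn^2)$, the constraints $\la F_1\ra_{v}=0$, $\la vF_1\ra_{v}=0$, $\la \frac12|v|^2F_1+G_1\ra_{v}=0$ ensuring that the macroscopic fields keep their definitions. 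Since the Euler proposition already gives $\sigma(M_{int})=pI$ and $q(M_{int},e_{int}M_{int})=0$, the whole task reduces to computing the first-order corrections $\sigma(F_1)$ and $q(F_1,G_1)$, inserting them into the balance laws, and reading off $\mu$, $\kappa$, $\alpha$ from the closure~\eqref{eq-fluxes_ns}.

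For the BGK model the correction is explicit. From~\eqref{eq:BGKf} one has $M_{int}[F,G]-F=\Kn\tau(\dt F+v\cdot\nabla_x F)$, so to leading order $F_1=-\tau(\dt M_{int}+v\cdot\nabla_x M_{int})$, and similarly for $G_1$. I would compute $\dt M_{int}+v\cdot\nabla_x M_{int}$ by the chain rule in the variables $(\rho,u,T)$ and then eliminate the time derivatives $\dt\rho$, $\dt u$, $\dt T$ using the non-conservative Euler equations~\eqref{eq-euler_non_cons}. This turns $F_1$ into $M_{int}$ times a polynomial in $(v-u)$ whose even (degree-two) part carries the traceless strain rate $\nabla u+(\nabla u)^T-\alpha\,\divx u$ and whose odd (degree-three) part carries $\nabla_x T$. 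Integrating $F_1$ against $(v-u)\otimes(v-u)$ and against $\frac12|v-u|^2(v-u)$ with the standard Gaussian moment identities yields $\sigma=-\mu(\nabla u+(\nabla u)^T-\alpha\,\divx u)$ with $\mu=\tau p$ and the translational part of Fourier's law; the internal part of $q$ comes the same way from $G_1$, and combining the two gives $\kappa=\mu\,c_p(T)$, whence $\Pr=1$.

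For the Fokker--Planck model the relaxation is a differential operator, so the correction is obtained by inverting the linearised collision operator. Writing the translational part as $L_F(\phi)=\divv((v-u)\phi+T\nabla_v\phi)$, a short computation shows $L_F(M_{int}\psi)=M_{int}\bigl(T\Delta_v\psi-(v-u)\cdot\nabla_v\psi\bigr)$, so $L_F$ is the Ornstein--Uhlenbeck operator whose eigenfunctions are Hermite polynomials in $(v-u)/\sqrt T$ with eigenvalue $-k$ on the degree-$k$ sector. At first order $L_F(F_1)=\tau(\dt M_{int}+v\cdot\nabla_x M_{int})$ with the same Euler-reduced right-hand side as for BGK; since that right-hand side lives in the degree-two sector (for the stress) and the degree-three sector (for the translational heat flux), inversion divides by $-2$ and $-3$ respectively, producing $\mu=\frac12\tau p$ and a heat conductivity reduced by the further factor $\frac13$. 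For the $G$-equation one must invert $L_F$ together with the extra relaxation term, which at leading order contributes $\frac2\tau(e_{int}(T)F_1-G_1)$; I would solve the coupled pair $(F_1,G_1)$ sector by sector in the Hermite basis, which fixes the internal contribution to $q$ and assembles the relation $\kappa=\frac23\mu\,c_p(T)$, hence $\Pr=\frac32$.

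The main obstacle is the internal-energy heat flux in the thermally perfect setting, where the specific heat $c_v^{int}$ depends on $T$. Unlike the constant-capacity polyatomic case, differentiating $e_{int}(T)$ and the inverse map $T=\mathbb T(e)$ through the chain rule generates temperature-dependent factors, and it is only after these are organised into $c_v(T)=\frac32R+c_v^{int}$ and $c_p(T)=c_v(T)+R$ that the clean forms $\alpha=\frac{c_p(T)}{c_v(T)}-1$ and $\kappa=\mu\,c_p(T)$ emerge; keeping track of these variable coefficients is where the algebra is most delicate. A secondary subtlety, specific to Fokker--Planck, is to verify that the fast relaxation term forces $G_1$ to follow $e_{int}(T)F_1$ up to a correction contributing to $q$ at exactly the order that gives $\Pr=\frac32$, while still respecting the $O(\Kn)$ solvability conditions so that mass, momentum, and energy are not disturbed.
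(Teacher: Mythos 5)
Your proposal follows essentially the same route as the paper: an additive Chapman--Enskog expansion with explicit inversion for BGK, and for Fokker--Planck the recognition that the linearised operator is the Ornstein--Uhlenbeck operator whose action on the (suitably adjusted) degree-two and degree-three sectors has eigenvalues $-2$ and $-3$, which is precisely the paper's relations $L_F(B)=-2B$, $L_F(A)=-3A$, $L_G(B,\tilde B)=-2\tilde B$, $L_G(A,\tilde A)=-3\tilde A$ and its ansatz $F_1=aA\cdot\nabla T/\sqrt T+bB:\nabla u$ with $a=-1/3$, $b=1/2$. The only cosmetic difference is that the paper uses the multiplicative decomposition $F=M_{int}(1+\Kn F_1)$ for the Fokker--Planck case, which is what your computation of $L_F(M_{int}\psi)$ amounts to anyway.
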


\subsubsection{Proof for the BGK model}
\label{sec:proof-bgk-model}
The usual Chapman-Enskog method is applied as follows. We decompose
$F$ and $G$ as $F = M_{int}[F,G] + \Kn F_1$ and $G = e_{int}(T)
M_{int}[F,G] +\Kn G_1$, which gives
\begin{equation*}
  \sigma(F) = p I + \Kn \sigma(F_1), \qquad \text{and} \qquad
q(F,G) = \Kn q (F_1,G_1).
\end{equation*}
Then we have to approximate $\sigma(F_1)$ and $q (F_1,G_1)$ up to
$O(\Kn)$. This is done by using the previous expansions
and~\eqref{eq:eq_marginales_fint} to
get
\begin{equation*}
\begin{split}
& F_1 = -\tau (\p_t M_{int}[F,G] + v\cdot\nabla_x M_{int}[F,G]) +O(\Kn), \\
& G_1 = -\tau (\p_t e_{int}(T)M_{int}[F,G] + v\cdot\nabla_x e_{int}(T)M_{int}[F,G]) + O(\Kn). \\
\end{split}
\end{equation*}
This gives the following approximations
\begin{equation}
  \sigma(F_1) = -\tau\la
(v-u)\otimes (v-u) (\p_t M_{int}[F,G] + v\cdot\nabla_x M_{int}[F,G])
  \ra_{v}   +O(\Kn),\label{eq-sigmaF1}
\end{equation}  
and
\begin{equation}
\begin{split}
 q(F_1,G_1) =&  -\tau  \la
(v-u)(\frac12 |v-u|^2 )(\p_t M_{int}[F,G] + v\cdot\nabla_x M_{int}[F,G])
 \ra_{v} \\
&  -\tau\la  
(v-u) (\p_t e_{int}(T)M_{int}[F,G] + v\cdot\nabla_x e_{int}(T)M_{int}[F,G] )
\ra_{v}
+ O(\Kn).\label{eq-qF1G1}
\end{split}
\end{equation}

Now it is standard to write $\dt M_{int}[F,G]$ and $\nabla_x M_{int}[F,G]$ as
functions of derivatives of $\rho$, $u$, and $T$, and then to use
Euler equations~\eqref{eq-euler} to write time derivatives as
functions of the space derivatives only. After some algebra, we get
\begin{equation}\label{eq-TMvib} 
 \dt\left(M_{int}(F,G)\right) + v \cdot \nabla_x \left(M_{int}(F,G)\right) = \frac{\rho}{T^{\frac{3}{2}}}M_0(V)\left(A \cdot \frac{\nabla
    T}{\sqrt{T}} + B : \nabla u \right)+ O(\Kn),
\end{equation}
where 
\begin{align*}
& V=\frac{v-u}{\sqrt{T}},  \qquad M_0(V) =
\frac{1}{(2\pi)^{\frac32}}\exp(-\frac{|V|^2}{2})\\
& A = \left(\frac{|V|^2}{2}-\frac{5}{2}\right)V, 
\qquad B = V\otimes V - \left(\frac{1}{c_v}\frac12|V|^2+\frac{e_{int}'(T)}{c_{v}(T)}\right)I.
\end{align*}
Then we introduce~\eqref{eq-TMvib} into~\eqref{eq-sigmaF1} to get
\begin{equation*}
  \sigma_{ij}(F_1) = -\tau \rho T \la
  V_iV_jB_{kl}M_0\ra_{V}\partial_{x_l}u_k + O(\Kn),
\end{equation*}
where we have used the change of variables $v\mapsto V$ in the
integral (the term with $A$ vanishes due to the parity of
$M_0$). Then standard Gaussian integrals (see~appendix~\ref{app:CE}) give
\begin{equation*}
  \sigma(F_1) = -\mu \left(\nabla u + (\nabla u)^T
      -\alpha \nabla \cdot u \, I\right) + O(\Kn), 
\end{equation*}
with $\mu = \tau \rho T$ and $\alpha= \frac{c_p}{c_v}-1$, 
which is the announced result, in a non-dimensional form.

For the heat flux, we use the same technique. First
for $e_{int}(T)M_{int}[F,G]$ we obtain
\begin{equation}\label{eq-TevibMvib} 
 \dt\left(e_{int}M_{int}(F,G)\right) + v \cdot \nabla_x \left(e_{int}M_{int}(F,G)\right) = \frac{\rho}{T^{\frac{3}{2}}}M_0(V)\left(\tilde A \cdot \frac{\nabla
    T}{\sqrt{T}} + \tilde B : \nabla u\right) + O(\Kn),
\end{equation}
where
\begin{align*}
&   \tilde A = \left(\frac{|V|^2}{2}-\frac{5}{2}+\frac{Te_{int}'(T)}{e_{int}}\right)V ,\\
&  \tilde B = V\otimes V - \left(\frac{1}{c_v}\frac12|V|^2+\frac{e_{int}'(T)}{c_{v}(T)}+\frac{Te_{int}'(T)}{c_v(T)e_{int}}\right)I.
\end{align*}
Then $q(F_1,G_1)$ as
given in~\eqref{eq-qF1G1} can be reduced to
\begin{equation*}
\begin{split}
q_i(F_1,G_1) & = 
-\tau \rho T\left( 
\la  \frac{1}{2}|V|^2V_iA_jM_0\ra_{V}
+ \la  V_i J A_jM_0\ra_{V}
\right)\partial_{x_j}T \\
& \qquad 
- \tau \rho 
 \la  V_i\tilde{A}_jM_0\ra_{V}\partial_{x_j}T.
\end{split}
\end{equation*}
Using again Gaussian integrals , we get
\begin{equation*}
q(F_1,G_1)=-\kappa\nabla_x T,  
\end{equation*}
where $\kappa = \mu c_p(T)$ with  $c_p(T)= \frac{d}{dT}(e(T) + \frac{p}{\rho}) = \frac{5}{2} +
e_{int}'(T)= 1 + c_v(T)$ in a non-dimensional form.

\subsubsection{Proof for the Fokker-Planck model}

Here, we rather use the decomposition $F = M_{int}(1 + \Kn F_1)$ and $G = e_{int}
M_{int} (1+\Kn G_1)$, which gives 
\begin{equation*}
  \sigma(F) = p I + \Kn \sigma(M_{int}F_1) \quad \text{and} \quad
  q(F,G) = \Kn q (M_{int}F_1,e_{int}M_{int}G_1),
\end{equation*}
in which, for clarity, the dependence of $M_{int}$ on $F$ and $G$ has been omitted,
and the dependence of $e_{int}$ on $T$ as well.
Finding $F_1$ and $G_1$ is less simple than for the BGK
model: however, the computations are very close to what is done in the
standard monatomic Fokker-Planck model (see~\cite{Mathiaud2016} for instance), so
that we only give the main steps here (see appendix~\ref{app:CE} for
details).

First, the decomposition is injected into~(\ref{eq-DFDGadim}) to get
\begin{align*}
&  D_F(F,G) = \frac{1}{\tau}M_{int}L_F(F_1) + O(\Kn) , \\
&  D_G(F,G) = \frac{1}{\tau}e_{int}M_{int}L_G(F_1,G_1) + O(\Kn) , 
\end{align*}
where $L_F$ and $L_G$ are linear operators defined by
\begin{equation} \label{eq-LFLG} 
\begin{split}
& L_F(F_1) = \frac{1}{M_{int}}\Bigl(\divv (TM_{int}\nabla_v F_1)\Bigr),\\
& L_G(F_1,G_1) = \frac{1}{e_{int}M_{int}}
 \Bigl( \divv (Te_{int}M_{int}\nabla_v G_1)
+2(F_1-G_1)
 \Bigr).
\end{split}
\end{equation}

Then the Fokker-Planck equations~(\ref{eq-FPf})-(\ref{eq-FPg}) suggest to look
for an approximation of $F_1$ and $G_1$ up to $O(\Kn)$ as solutions of 
\begin{align*}
&  \p_t M_{int} + v\cdot\nabla_x M_{int} =
  \frac{1}{\tau}M_{int}(F,G)L_F(F_1) \\
& \p_t e_{int}M_{int} + v\cdot\nabla_x e_{int}M_{int}
 =   \frac{1}{\tau}e_{int}M_{int}(F,G)L_G(F_1,G_1).
\end{align*}
By using~(\ref{eq-TMvib})-(\ref{eq-TevibMvib}), these relations are
equivalent, up to
another $O(\Kn)$ approximation, to 
\begin{equation}\label{eq-F1G1FP} 
L_F(F_1) = \tau \left(A \cdot \frac{\nabla T}{\sqrt{T}} 
+ B:\nabla u\right), 
\quad \text{ and } \quad
 L_G(F_1,G_1) = \tau \left(\tilde{A} \cdot \frac{\nabla T}{\sqrt{T}} 
+ \tilde{B}:\nabla u\right),
\end{equation}
where $A$, $B$, $\tilde{A}$, and $\tilde{B}$ are the same as for the
BGK equation in the previous section.

Now, we rewrite $L_F(F_1)$ and $L_G(F_1,G_1)$, defined in~\eqref{eq-LFLG}, by using the change of
variables $V = \frac{v-u}{\sqrt{T}}$ to get
\begin{equation*}
\begin{split}
& L_F(F_1) = -V\cdot \nabla_V F_1 + \nabla_V \cdot (\nabla_V F_1) 
 , \\
& L_G(F_1,G_1) = L_F(G_1) + 2(F_1-G_1).
\end{split}
\end{equation*}
Then simple calculation of derivatives show that $A$, $B$, $\tilde{A}$, and $\tilde{B}$ satisfy the
following properties
\begin{align*}
&   L_F(A) = -3 A, \qquad L_F(B) = -2 B, \\
&   L_G(A,\tilde{A}) = -3 \tilde{A}, \qquad L_G(B,\tilde{B}) = -2 \tilde{B}. 
\end{align*}
Therefore, we look for $F_1$ and $G_1$ as solution
of~(\ref{eq-F1G1FP}) under the following form
\begin{equation*}
  F_1 = a A \cdot \frac{\nabla T}{\sqrt{T}} + b B:\nabla u \quad
  \text{ and } \quad 
   G_1 = \tilde{a} \tilde{A} \cdot \frac{\nabla T}{\sqrt{T}} + \tilde{b} \tilde{B}:\nabla u ,
\end{equation*}
and we find $\tilde{a}=a=-1/3$ and $\tilde{b}=b=1/2$.

Finally, using these relations into $\sigma$ and $q$ and using some
Gaussian integrals (see appendix~\ref{app:CE}) give
\begin{equation*}
  \sigma(M_{int}F_1) = -\mu \left(\nabla u + (\nabla u)^T
      -\alpha \nabla \cdot u \, I\right) \quad \text{ and } \quad 
   q (M_{int}F_1,e_{int}M_{int}G_1) = -\kappa\nabla_x T,
\end{equation*}
where $\alpha= \frac{c_p}{c_v}-1$, $\mu = \frac{\tau}{2}\rho T$, and
$\kappa = \frac{2}{3}\mu c_p(T)$, which is the announced result, in a
non-dimensional form.

\section{Extension of  the model}
\label{sec:extens}

\subsection{Extension to several type of energies}

The model we present in this paper recovers Navier-Stokes with  potentially false Prandtl number for both BGK and Fokker-Planck models as usual. If one wants to describe more precisely relaxation phenomena of molecules, one has to consider each independent internal energies which means for instance that rotational energy and vibrational energies have to be separated to capture them correctly. Before going to ESBGK or ES Fokker-Planck models we now present the framework that should allow to go further. Let us define $e_{int}^a..,e_{int}^n$, $n$ independent terms of the internal energy (rotation energy, vibrations energy, electronic energy..) depending
on temperature through strictly increasing functions. We can define   entropy's $s_{int}^a..,s_{int}^n$ associated to each energy  satisfying $T^i_{int}ds_{int}^i=e_{int}^i$. As before we  define $F$ the function transporting the velocities as well as $G^a,...,G^n$ the functions that transport other energies. The macroscopic variables 
are now obtained through $F$
and $G^a,...,G^n$ only, as it is shown in the following proposition.
\begin{proposition}[Moments of the reduced distributions]
The macroscopic variables $\rho$, $u$, and $e$  are defined through
\begin{equation}  \label{eq-mtsred2} 
 \rho = \la F \ra_{v},\qquad
 \rho u = \la v F \ra_{v},\qquad
 \rho e = \la (\frac12(v-u)^2)  F\ra_{v}+\la (G^a+..+G^n)\ra_{v}.
\end{equation}
\end{proposition}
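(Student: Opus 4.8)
The plan is to reduce the multi-energy statement to the single-energy construction of Section~\ref{sec:kinetic}, applied mode by mode, and then to sum. First I would observe that $F$ plays exactly the same role as in the single-energy model: it is the phase-space mass density, so that the first two relations in~\eqref{eq-mtsred2} are literally the definitions~\eqref{eq-mtsred} of $\rho$ and $\rho u$, together with the translational energy $\rho e_{tr}=\la(\frac12(v-u)^2)F\ra_{v}$. Nothing new has to be checked for these quantities, since they involve $F$ alone and the mean velocity $u$ is fixed self-consistently by $\rho u=\la vF\ra_{v}$.

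The substance is the energy relation. I would argue that each $G^i$ is, by construction, the phase-space density of the $i$-th internal energy, exactly as $G$ was the density of the single internal energy in Section~\ref{sec:kinetic}; hence $\la G^i\ra_{v}=\rho e_{int}^i$, where $e_{int}^i$ is the $i$-th internal energy per unit mass. Using the additivity of the independent internal energies, $e_{int}=e_{int}^a+\dots+e_{int}^n$, I would then sum these identities to obtain $\rho e_{int}=\sum_i\rho e_{int}^i=\la(G^a+\dots+G^n)\ra_{v}$. Adding the translational contribution yields $\rho e=\rho e_{tr}+\rho e_{int}=\la(\frac12(v-u)^2)F\ra_{v}+\la(G^a+\dots+G^n)\ra_{v}$, which is the last relation in~\eqref{eq-mtsred2}.

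The only point requiring care --- the main obstacle, such as it is --- is to confirm that the total energy still determines the temperature unambiguously, so that the map $e\mapsto T$ implicit in~\eqref{eq-mtsred2} remains well-defined. Here I would invoke the standing assumption that each $e_{int}^i$ is a strictly increasing function of $T$: then $e=\frac32RT+\sum_i e_{int}^i(T)$ is a sum of strictly increasing functions, hence strictly increasing and therefore a bijection, exactly as in~\eqref{eq-defT}. This is the multi-mode analogue of the one-to-one function $\mathbb T$, and it guarantees that $T=\mathbb T(e)$ is recovered from the moments above. The remainder is routine bookkeeping, since the velocity moments of $F$ are untouched and the internal-energy moments simply add.
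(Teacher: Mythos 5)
Your argument is correct and matches the paper's treatment: the proposition is essentially definitional (the paper states the moments ``are defined through'' these relations and gives no separate proof), and the one substantive point you identify --- that $e=\frac32RT+\sum_i e_{int}^i(T)$ is strictly increasing so $T=\mathbb T(e)$ is well defined --- is exactly what the paper's standing assumption that each $e_{int}^i$ is a strictly increasing function of $T$ is there to guarantee.
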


Now it is possible to write a reduced entropy as a function of $F$
and $G^a,...,G^n$  only, as it is shown in the following proposition.
\begin{proposition}[Entropy]\label{minentropgen}
We define the following reduced entropy for $F$,$G^a$...$G^n$ $\mathcal{H}(F,G)$: 
\begin{eqnarray}
\mathcal{H}(F,G^a..,G^n)= \la F\log(F)- F\frac{s_{int}^a}R\left(\frac {G^a} F\right)-...-\- F\frac{s_{int}^n}R\left(\frac {G^n} F\right) \ra_{v}. \label{eq-HFGn}
\end{eqnarray}
\begin{enumerate}
\item  The partial derivatives of $H$ computed at $(F,G)$ are:
\begin{eqnarray}  
  &&D_F H(F,G)=1+\log(F)+\sum_{i=a}^n\left(\frac{G^a}{RT_{int}^a(G^a/F)F}-\frac{s_{int}^a}R\left(\frac {G^a} F\right)\right),\\
&& D_{G^a} H(F,G)=-\frac1{RT_{int}^a(G^a/F)}.
\end{eqnarray}

\item We note $\ds \mathbb{H}$
 the Hessian matrix of $H$ which can be computed through:
\begin{eqnarray}
 D_{F,F}H &=&\frac 1 F +\sum_{i=a}^n\frac{{G^i}^2}{F^3{c_v^i}_{int}R{T^i_{int}}^2(G^i/F)}, \\
D_{G^a,F}= D_{F,G^a}H&=&-\frac {G^a}{F^2{c_v^a}_{int}R{T^a_{int}}^2(G^a/F)}\\
 D_{G^a,G^b}H&=&0 (a\neq b), \\
 D_{G^a,G^a}H&=&\frac{1}{F{c_v^a}_{int}R{T^a_{int}}^2(G^a/F)}
\end{eqnarray}
 Moreover, we have the following equality's:
\begin{equation}\label{eq-FDGn} 
 FD_{F,F}H+\sum_{i=a}^n G^iD_{F,G^i}H = 1 \hspace{0.2cm}, \hspace{0.2cm} FD_{F,G^a}H+G^aD_{G^a,G^a}H = 0.
\end{equation}

\item The function $(F,G^a,...,G^n) \mapsto H(F,G^a,...,G^n)$ is convex.

\item  Let $(F,G^a,...,G^n)$ be  reduced distributions and $\rho$, $\rho u$, and $\rho e$ their moments as defined by~\eqref{eq-mtsred2}. Let ${\cal S}$ be the convex set defined by
\begin{equation*}
    \mathcal{S}=\left\{(F_1,G_1^a..,G_1^n) \text{ such that} \la F_1\ra_{v}=\rho ,\quad
  \la vF_1\ra_{v}=\rho u, \quad 
  \la \frac12|v|^2 F_1+G_1^a+..+G_1^n\ra_{v} = \rho e \right\}.
\end{equation*} The minimum of 
  $\mathcal{H}$ on ${\cal S}$
is obtained for  $(M_{int}(F,G), e^a_{int}(T)M_{int}(F,G),...,e^n_{int}(T)M_{int}(F,G))$
with
\begin{equation}  \label{eq-MintFGn}
M_{int}[F,G]=\frac{\rho}{\sqrt{2\pi RT}^3} \exp\left(-\frac{|v - u|^2}{2 RT}\right)  
 \end{equation}
where $e^a_{int}(T)$ is the equilibrium internal $a$ energy obtained for $T$.

\item For every $(F_1,G_1^a,...,G_1^n)$ in $\mathcal{S}$, we have
\begin{eqnarray}
 0 & \ge&  \ds H(M_{int}(F,G^a,...,G^n), e_{int}(T)M_{int}(F,G^a,...,G^n)) -H(F_1,G_1^a,...,G_1^n) \notag \\
&\ge&  D_1H(F,G)(M_{int}(F,G^a,...,G^n)-F) \notag\\
&&+\sum_{i=a}^nD_{G^i}(F,G^a,...,G^n)(e^i_{int}(T)M_{int}(F,G^a,...,G^n))-G^i)
\end{eqnarray}
\end{enumerate}

\end{proposition}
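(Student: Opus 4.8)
The plan is to follow the single-energy arguments of Propositions~\ref{entrop} and~\ref{minentrop} essentially verbatim, exploiting the fact that the entropy density~\eqref{eq-HFGn} decouples additively over the indices $a,\dots,n$. For point~1 I would compute the first derivatives by the chain rule applied to each summand $-F\frac{s_{int}^i}{R}(G^i/F)$, using the thermodynamic identity $\frac{d}{de_{int}^i}s_{int}^i = 1/T_{int}^i$ from~\eqref{eq-defTint} together with the fact that $G^i/F$ plays the role of $e_{int}^i$; each term reproduces the single-energy formula~\eqref{eq-partialH} and summing gives $D_F H$ and $D_{G^a}H$. For the Hessian in point~2 I would differentiate once more, using $\frac{dT_{int}^i}{de_{int}^i}=1/c_v^{i,int}$ (again~\eqref{eq-defTint}) to evaluate $\frac{d}{dx}(1/T_{int}^i(x))$. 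The only genuinely new feature is that $D_{G^a}H$ depends on $F$ and $G^a$ alone, so the mixed derivatives $D_{G^a,G^b}H$ vanish for $a\neq b$; all surviving entries are single-index copies of the single-energy expressions.

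Rather than verifying the identities~\eqref{eq-FDGn} by brute force, I would obtain them structurally: each $D_{G^a}H$ is a function of the ratio $G^a/F$ only, hence homogeneous of degree $0$ in $(F,G^a,\dots,G^n)$, and $D_F H-\log F$ is likewise degree-$0$ homogeneous. Euler's relation for homogeneous functions then yields the second identity of~\eqref{eq-FDGn} at once, and applied to $D_F H$ (whose $\log F$ supplies the $1$) yields the first. Equivalently, these identities state that $\mathbb{H}\,(F,G^a,\dots,G^n)^T=(1,0,\dots,0)^T$.

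For convexity (point~3) the $2\times 2$ shortcut ``trace and determinant positive'' no longer applies, so I would instead complete the square. Writing $\lambda_i=1/(F\,c_v^{i,int}R\,{T_{int}^i}^2)>0$ and $g^i=G^i/F$, the quadratic form of $\mathbb{H}$ evaluated on $(\xi,\eta^a,\dots,\eta^n)$ collapses, after substituting the explicit entries, to $\frac{\xi^2}{F}+\sum_i \lambda_i(\xi g^i-\eta^i)^2$, which is manifestly nonnegative and vanishes only at the origin; hence $\mathbb{H}$ is positive definite and $H$ is convex. This sum-of-squares decomposition is the one step that really uses the decoupled block structure, and is in my view the main computational point to get right.

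For points~4 and~5 I would reproduce the Lagrangian computation of Proposition~\ref{minentrop}. Stationarity of $\mathcal{L}$ gives $D_F H(F_1,G_1)=\alpha+\beta\cdot v+\tfrac12\gamma|v|^2$ and $D_{G^a}H(F_1,G_1)=\gamma$ for every $a$ --- note that the single scalar energy constraint produces a single multiplier $\gamma$ shared by all indices. Since each $T_{int}^a$ is a bijection, $D_{G^a}H=-1/(R\,T_{int}^a(G^a/F_1))=\gamma$ forces $G^a/F_1$ to be constant and, crucially, forces all internal temperatures to coincide with the common value $-1/(R\gamma)$; the equation for $D_F H$ then makes $F_1$ a Gaussian whose coefficient of $|v|^2$ equals $\gamma/2$, so its translational temperature equals that same common value. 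Thus the minimizer is exactly $(M_{int}[F,G],\,e^a_{int}(T)M_{int}[F,G])$ with the single equilibrium temperature $T=\mathbb{T}(e)$, after matching $\rho,\rho u,\rho e$ as in the single-energy case. Finally, point~5 is the tangent-plane inequality for the convex $H$ combined with the minimization: convexity gives $H(M_{int},\dots)-H(F,G^a,\dots)\ge D_F H(F,G)(M_{int}-F)+\sum_a D_{G^a}H(F,G)(e^a_{int}(T)M_{int}-G^a)$ pointwise, while integrating the minimization property gives $\mathcal{H}(M_{int},\dots)\le\mathcal{H}(F,\dots)$, and chaining the two yields the stated inequalities. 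The main obstacle here is conceptual rather than computational: ensuring that the single energy constraint still forces a \emph{common} temperature across all internal modes and the translational mode, which is precisely what makes the minimizer a genuine equilibrium.
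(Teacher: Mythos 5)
Your proposal is correct, and on the one step the paper actually writes out in detail --- positive definiteness of the Hessian --- you take a genuinely different route. The paper's proof is two-staged: it first observes that the quadratic form restricted to the subspace $\{\xi=0\}$ (vectors of the form $(0,x_1,\dots,x_n)$) is diagonal with positive entries, so $\mathbb H$ has at least $n$ positive eigenvalues, and then expands the arrowhead determinant and uses the identities~\eqref{eq-FDGn} to collapse it to $\det(\mathbb H)=\frac1F\prod_i D_{G^i,G^i}H>0$, forcing the remaining eigenvalue to be positive. Your sum-of-squares decomposition
\begin{equation*}
Q(\xi,\eta^a,\dots,\eta^n)=\frac{\xi^2}{F}+\sum_{i=a}^n\lambda_i\bigl(g^i\xi-\eta^i\bigr)^2,
\qquad \lambda_i=\frac{1}{F\,c_v^{i,int}R\,(T_{int}^i)^2}>0,\quad g^i=\frac{G^i}{F},
\end{equation*}
reaches the same conclusion in one line, is manifestly positive definite for $F>0$, and avoids the implicit appeal to eigenvalue interlacing that the paper's subspace argument relies on; I checked that it reproduces all the Hessian entries exactly. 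Your Euler-homogeneity derivation of~\eqref{eq-FDGn} (equivalently $\mathbb H\,(F,G^a,\dots,G^n)^T=(1,0,\dots,0)^T$) is likewise a structural replacement for the paper's unstated ``direct computation,'' and it is consistent with the fact that the paper itself uses precisely these identities to simplify the determinant. For points 1, 2, 4 and 5 you follow the same template as the paper, which simply defers to the single-energy Propositions~\ref{entrop} and~\ref{minentrop}; your added observation that the single Lagrange multiplier $\gamma$ attached to the single scalar energy constraint forces all internal temperatures and the translational temperature to coincide is exactly the point that makes the deferral legitimate, and it is worth having spelled out.
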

\begin{proof}
The proof is the same as the one with one energy. The only tricky part (that we prove here) is
that the Hessian is positive definite. The quadratic form associated to $\mathbb H$ is clearly positive definite on vectors of the form $(0,x_1...,x_n)$ because the diagonal terms of the matrix are strictly positive on this subspace so the Hessian have at least $n$ strictly positive eigenvalues.  To ensure that the Hessian matrix is positive definite it is sufficient to have a strictly positive determinant. Developing the determinant one gets:
\begin{eqnarray*}
\det(\mathbb H)&=&D_{F,F}(H)\prod_{i=a}^n D_{G^a,G^a}(H)-\sum_{i=a}^n D_{G^i,F}^2(H)\prod_{i\neq j} D_{G^j,G^j}(H)\\
&=&\frac1F \prod_{i=a}^n D_{G^a,G^a}(H) -\sum_{i=a}^n \frac{G^i}{F}(H)D_{G^i,F}(H) \prod_{i=a}^n D_{G^a,G^a}(H) \\
&&- \sum_{i=a}^n \left(-\frac{G^i}{F}D_{G^i,G^i}(H)D_{G^i,F}(H)\right)\prod_{i\neq j} D_{G^j,G^j}(H)\\
&=&\frac1F \prod_{i=a}^n D_{G^a,G^a}(H) -\sum_{i=a}^n \frac{G^i}{F}D_{G^i,F}(H) \prod_{i=a}^n D_{G^a,G^a}(H) + \left(\sum_{i=a}^n \frac{G^i}{F}D_{G^i,F}(H)\right)\prod_{i=a}^n D_{G^i,G^i}(H)\\
&=&\frac1F \prod_{i=a}^n D_{G^a,G^a}(H)\\
&>&0
\end{eqnarray*}
so that the Hessian is positive definite. Equilibrium property is the same as with one energy and convex properties are obtained thanks to the Hessian
\end{proof}

Thanks to this framework we are able able to give the $BGK$ model and the Fokker-Planck model associated to $n$ energies as well as their Chapman-Enskog expansion. We do not give the proof since there are exactly the same as before. 

\subsection{BGK model and its hydrodynamic limit for $n$ energies}

For physics considerations, it is interesting to reduce  complex
kinetic models by using the usual reduced distribution
technique~\cite{HH_1968}. Even for perfect gases, some degenerate models of energy cannot be described easily through equilibrium (extensions of Maxwellians is not very clear when ones deals with partial degrees of freedom of internal energy. In this paper we propose to use two reduced functions $F$ and $G$ that will transport energy. $F$ is transporting the translational energy whereas $G$ transport the reminder of internal energy. More precisely they are defined through:
\begin{align}
& \p_t F + v\cdot\nabla_x F  =  \frac1 \tau \left( M_{int}[F,G] - F \right) \, , \label{eq:eq_marginales_fintn}\\
&  \p_t G^i + v\cdot\nabla_x G^i  = \frac1 \tau ( e_{int}^a M_{int}[F,G^a,..,G^n] - G^i ) \hspace{1cm}  \forall a\leq i\leq n \label{eq:eq_marginales_gintn}    
\end{align}

where the reduced Maxwellian is
\begin{equation*}
M_{int}[F,G]=\frac{\rho}{\sqrt{2\pi RT}^3} \exp\left(-\frac{|v - u|^2}{2 RT}\right),
\end{equation*}

and the macroscopic quantities are defined by

\begin{equation}\label{eq-macro_reducedn} 
 \rho = \la F \ra_{v}, \qquad \rho u= \la vF \ra_{v},
 \qquad 
 \rho e= \la (\frac12(v-u)^2)F \ra_{v}+\sum_{i=a}^n\la G_i\ra_{v},
\end{equation}

and $T$ is still defined by~\eqref{eq-defT}.

System~(\ref{eq:eq_marginales_fintn}--\ref{eq:eq_marginales_gintn})
naturally satisfies local conservation laws of mass, momentum, and
energy. Moreover, the H-theorem holds with the reduced entropy 
$H(F,G)$ as defined in~(\ref{eq-HFG}). Indeed, we recover the two following propositions:
\begin{proposition}
The reduced BGK
system~(\ref{eq:eq_marginales_fintn}--\ref{eq:eq_marginales_gintn})
satisfies the H-theorem
\begin{equation*}
\partial_t {\cal H}(F,G^a,...,G^n) +\divx\la v H(F,G^a,...,G^n)\ra_{v}\leq 0,
\end{equation*}
where  ${\cal H}(F,G^a,...,G^n)$ is the reduced entropy defined in~\eqref{eq-HFGn}.
\end{proposition}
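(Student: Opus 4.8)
The plan is to mirror exactly the structure of the single-energy H-theorem proof for the BGK model, since the entropy $\mathcal{H}(F,G^a,\dots,G^n)$ is additively built from one translational piece and $n$ internal pieces that do not interact (the Hessian is block-structured with $D_{G^a,G^b}H=0$ for $a\neq b$). First I would differentiate $\mathcal{H}$ in time and use the transport form of the equations to write
\begin{equation*}
\partial_t \mathcal{H} + \divx\la v H\ra_{v}
= \la D_F H\,(\partial_t F + v\cdot\nabla_x F) + \sum_{i=a}^n D_{G^i}H\,(\partial_t G^i + v\cdot\nabla_x G^i)\ra_{v}.
\end{equation*}
Then I would substitute the right-hand sides of~\eqref{eq:eq_marginales_fintn}--\eqref{eq:eq_marginales_gintn}, replacing each transport term by its relaxation term divided by $\tau$, to obtain
\begin{equation*}
\partial_t \mathcal{H} + \divx\la v H\ra_{v}
= \frac1\tau\la D_F H\,(M_{int}-F) + \sum_{i=a}^n D_{G^i}H\,(e^i_{int}M_{int}-G^i)\ra_{v}.
\end{equation*}

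The key observation is that the expression inside the bracket is precisely the first-order Taylor term appearing in point~5 of Proposition~\ref{minentropgen}, evaluated with $(F_1,G_1^a,\dots,G_1^n)$ taken to be the equilibrium couple $(M_{int},e^a_{int}M_{int},\dots,e^n_{int}M_{int})$. Since the equilibrium couple realises the same moments $\rho$, $\rho u$, $\rho e$ as $(F,G^a,\dots,G^n)$ by the moment identities (the generalisation of the conservation proposition), it belongs to the convex set $\mathcal{S}$. Hence the minimisation inequality in point~5 applies directly: the convexity of $H$ guarantees that this first-order term is bounded above by the difference of entropy values, which is non-positive because the equilibrium minimises $\mathcal{H}$ over $\mathcal{S}$. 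This yields $\partial_t \mathcal{H} + \divx\la v H\ra_{v}\leq 0$.

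I expect the main obstacle to be purely bookkeeping rather than conceptual: one must verify that the relaxation terms on the right-hand side match the argument slots of the first-order Taylor expansion with the correct signs, namely that $D_F H$ multiplies $(M_{int}-F)$ and each $D_{G^i}H$ multiplies $(e^i_{int}M_{int}-G^i)$, so that the whole bracket equals the lower bound in point~5 of Proposition~\ref{minentropgen}. Once this alignment is checked, the convexity inequality does all the work and no integration by parts or Hessian estimate is needed—unlike the Fokker-Planck case, the BGK H-theorem uses only the first derivatives of $H$ together with the minimisation property. Since the argument is word-for-word the same as the one-energy case with sums replacing single internal terms, I would simply state that the proof is identical to that of the single-energy H-theorem, invoking Proposition~\ref{minentropgen} in place of Proposition~\ref{minentrop}.
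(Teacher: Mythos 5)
Your proposal is correct and follows exactly the route the paper takes: the paper omits the $n$-energy proof entirely, stating it is identical to the single-energy case, and that single-energy proof is precisely your argument (differentiate, replace transport by relaxation terms, and conclude from the first-order convexity/minimisation inequality of the entropy proposition). Your alignment check of $D_FH$ with $(M_{int}-F)$ and each $D_{G^i}H$ with $(e^i_{int}M_{int}-G^i)$ is exactly what makes the lower bound in point~5 of Proposition~\ref{minentropgen} applicable, and the chain $0\ge H(\mathrm{eq})-H(F_1,\dots)\ge(\text{bracket})$ gives the non-positivity directly.
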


\begin{proposition}
  The moments of the solution of the BGK 
  models~\eqref{eq:eq_marginales_fintn}-\eqref{eq:eq_marginales_gintn}
  satisfy,
  up to $O(\Kn^2)$, the Navier-Stokes equations
\begin{equation}
\begin{split}
& \dt \rho + \nabla \cdot \rho u = 0, \\
& \dt \rho u + \nabla\cdot  (\rho u\otimes u) + \nabla p = -\nabla
\cdot \sigma, \\
& \dt E + \nabla \cdot (E+p)u = -\nabla\cdot q - \nabla\cdot(\sigma u),
\end{split}
\end{equation}
where the shear stress tensor and the heat flux are given by
\begin{equation}  
\sigma = -\mu \bigl(\nabla u + (\nabla u)^T -\alpha\nabla\cdot
u\bigr), \quad \text{and} \quad  q=-\kappa \nabla \cdot T,
\end{equation}
and where the following values of the viscosity and heat
transfer coefficients (in dimensional variables) are
\begin{equation}  
\begin{split}
& \mu = \tau p, \quad \text{and} \quad
\kappa = \mu c_p(T),
\end{split}
\end{equation}
 while the volum viscosity coefficient is $
 \alpha=\frac{c_p(T)}{c_v(T)}-1 $  and  $c_p(T)=\frac{d}{dT}(e(T)+p/\rho) = c_v(T) +R$ is the heat capacity at constant pressure.
Moreover,
the corresponding Prandtl number is
\begin{equation}  
\Pr = \frac{\mu c_p(T)}{\kappa}=1
\end{equation}

\end{proposition}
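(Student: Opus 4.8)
The plan is to reproduce the Chapman--Enskog argument of Section~\ref{sec:proof-bgk-model} line by line, the only new ingredient being that the additive structure $e_{int}(T)=\sum_{i=a}^n e^i_{int}(T)$ is preserved throughout. First I would take the moments $1$, $v$, $\frac12|v|^2$ of~\eqref{eq:eq_marginales_fintn}--\eqref{eq:eq_marginales_gintn}, which yields the same conservation laws as before with stress tensor $\sigma(F)$ and heat flux $q(F,G^a,\dots,G^n)=\la(\frac12|v-u|^2F+\sum_{i=a}^n G^i)(v-u)\ra_v$. I would then insert the expansions $F=M_{int}[F,G]+\Kn F_1$ and $G^i=e^i_{int}(T)M_{int}[F,G]+\Kn G^i_1$. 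Since the $F$-equation and the Maxwellian $M_{int}[F,G]$ are unchanged from the single-energy model (the temperature $T$ is still defined through the total energy, and the Euler closure~\eqref{eq-euler_non_cons} uses $c_v(T)=\frac{d}{dT}e(T)$), the correction $F_1=-\tau(\p_t M_{int}+v\cdot\nabla_x M_{int})+O(\Kn)$ is literally the same, with the tensor $B$ of~\eqref{eq-TMvib} read with $e_{int}'(T)=\sum_i(e^i_{int})'(T)$. The Gaussian integrals therefore give $\sigma(F_1)$, hence $\mu=\tau\rho T$ (that is $\mu=\tau p$) and $\alpha=c_p/c_v-1$, exactly as in the scalar case.

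For the heat flux the key observation is linearity: from~\eqref{eq:eq_marginales_gintn} each correction is $G^i_1=-\tau(\p_t+v\cdot\nabla_x)(e^i_{int}(T)M_{int})+O(\Kn)$, so summing over $i$ and using $\sum_i e^i_{int}=e_{int}$ gives
\begin{equation*}
\sum_{i=a}^n G^i_1 = -\tau(\p_t+v\cdot\nabla_x)\Bigl(\sum_{i=a}^n e^i_{int}(T)M_{int}\Bigr)+O(\Kn)= -\tau(\p_t+v\cdot\nabla_x)\bigl(e_{int}(T)M_{int}\bigr)+O(\Kn),
\end{equation*}
which is precisely the single-energy correction $G_1$ of Section~\ref{sec:proof-bgk-model}. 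Consequently $q(F_1,G^a_1,\dots,G^n_1)$ coincides with the scalar heat flux $q(F_1,G_1)$, and the same Gaussian integrals (Appendix~\ref{app:CE}) produce $q=-\kappa\nabla_x T$ with $\kappa=\mu c_p(T)$, $c_p=c_v+R$, and therefore $\Pr=\mu c_p/\kappa=1$.

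The only point that deserves care---and the step I expect to be the main obstacle---is justifying that this superposition is compatible with the replacement of time derivatives by space derivatives, since the closure rests on the single total temperature equation rather than on separate balances for the individual energies. This is harmless because every $e^i_{int}$ is a function of the one temperature $T$, so $\p_t e^i_{int}(T)=(e^i_{int})'(T)\p_t T$ and the common factor $\p_t T$ is eliminated by the same Euler relation~\eqref{eq-euler_non_cons} for all $i$; summing the resulting expressions reconstitutes exactly the total-energy closure already used for $M_{int}$. Once this consistency is noted, the computation is identical to Section~\ref{sec:proof-bgk-model} and the stated coefficients, volume viscosity, and unit Prandtl number follow.
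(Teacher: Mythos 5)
Your proposal is correct and follows exactly the route the paper intends: the paper omits the proof of this proposition, stating only that it is ``exactly the same as before'' (i.e., as the single-energy Chapman--Enskog computation of Section~\ref{sec:proof-bgk-model}), and your reduction via linearity --- summing the corrections $G^i_1$ to recover the scalar correction $G_1$ while observing that the $F$-equation, the Maxwellian, and the Euler closure in the single total temperature $T$ are untouched --- is precisely the detail that makes that claim rigorous. Nothing further is needed.
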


\subsection{Fokker-Planck model and its hydrodynamic limit for $n$ energies}

By analogy,  we propose the following reduced Fokker-Planck model:
\begin{align}
&  \p_t F + v\cdot\nabla_x F  =  D_F(F,G^a,...,G^n),\label{eq-FPFn}  \\
&  \p_t G^i + v\cdot\nabla_x G^i =  D_{G^i}(F,G^i,...,G^n) \label{eq-FPGn} \hspace{1cm}\forall a\leq i\leq n ,     
\end{align}
with
\begin{equation}\label{eq-DFDGn}   
\begin{split}
& D_F(F,G^a,...,G^n)=\frac{1}{\tau}\left(\divv\bigl((v-u)F+RT\nabla_v
  F\bigr)\right),\\
& D_{G^i}(F,G^a,...,G^n)= \frac{1}{\tau}\left(\divv\bigl((v-u)G^i+RT\nabla_v
  G^i\bigr)\right)
 +\frac2{\tau}\left(e^i_{int}(T)F-G^i\right),
\end{split}
\end{equation}
where the macroscopic values are defined as
in~(\ref{eq-macro_reducedn}) and ~\eqref{eq-defT}.
Using direct calculations and dissipation properties  we can prove the following propositions.
\begin{proposition} \label{prop:FPn}
The collision operator conserves the mass, momentum, and energy:
\begin{equation*}
  \la(1,v)D_F(F,G^a,...,G^n)\ra_{v} = 0 \quad \text{ and } \quad
\la\frac12 |v|^2D_F((F,G^a,...,G^n) + D_G(F,G^a,...,G^n)\ra_{v} = 0,
\end{equation*}
the reduced entropy $\mathcal{H}(F,G^a,...,G^n)$ satisfies the H-theorem:
\begin{equation*}
\partial_t \mathcal{H}(F,G^a,...,G^n) +\divx\la v H(F,G^a,...,G^n)\ra_{v}\leq 0,
\end{equation*}
and we have the equilibrium property
\begin{eqnarray*}
 && (D_F(F,G^a,...,G^n) = 0 \text{ and } \forall i, D_{G^i}(F,G^a,...,G^n) = 0) \\&\Leftrightarrow&
(F = M_{int}(F,G^a,...,G^n) \text{ and }  \forall i, G^i = e_{int}^i(T)M_{int}(F,G^a,...,G^n) ).
\end{eqnarray*}

\end{proposition}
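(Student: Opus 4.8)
The plan is to mirror the single-energy argument of Proposition~\ref{prop:FP}, exploiting that the operators for the different internal energies are mutually decoupled and couple only to $F$. For the conservation properties I would integrate~\eqref{eq-DFDGn} directly in $v$: the drift-diffusion parts $\divv((v-u)F+RT\nabla_v F)$ and $\divv((v-u)G^i+RT\nabla_v G^i)$ are perfect divergences in $v$ and vanish upon integration, so $\la(1,v)D_F\ra_v=0$ is immediate. For the energy I would integrate $\frac12|v|^2 D_F$ by parts, recovering the classical monatomic identity $\la\frac12|v|^2\divv((v-u)F+RT\nabla_v F)\ra_v=0$ thanks to $e_{tr}=\frac32 RT$; the remaining relaxation contributions then sum to $\frac{2}{\tau}\bigl(\rho\sum_i e^i_{int}-\sum_i\la G^i\ra_v\bigr)=0$ by~\eqref{eq-mtsred2} together with $\sum_i e^i_{int}=e_{int}$.

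For the equilibrium property I would rewrite each operator compactly as $D_F=\frac1\tau\divv(M_{int}\nabla_v(F/M_{int}))$ and $D_{G^i}=\frac1\tau\divv(M_{int}\nabla_v(G^i/M_{int}))+\frac2\tau(e^i_{int}F-G^i)$. Testing $D_F=0$ against $F/M_{int}$ and integrating by parts produces a positive definite quadratic form, forcing $F=M_{int}$ exactly as before. Then, for each index $i$ separately and using $F=M_{int}$, I would test $D_{G^i}=0$ against $G^i/(e^i_{int}M_{int})$ and reproduce verbatim the one-energy chain of inequalities terminating in $\frac{2}{\tau}(\rho e^i_{int}-\la G^i\ra_v)=0$, yielding $G^i=e^i_{int}M_{int}$. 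Since the $G^i$-equations never couple to one another, these $n$ arguments are independent.

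The H-theorem is the substantial part. Differentiating $\mathcal{H}$ and using~\eqref{eq-FPFn}--\eqref{eq-FPGn} gives $\tau\mathcal{D}(F,G^a,\dots,G^n)=\tau\la (D_FH)D_F+\sum_i(D_{G^i}H)D_{G^i}\ra_v$, where $D_FH$ and $D_{G^i}H$ are the partial derivatives from Proposition~\ref{minentropgen}. After integrating by parts I would introduce, for each velocity direction $j$, the vector $\mathbf V_j=(F(v_j-u_j)+RT\partial_{v_j}F,\,G^a(v_j-u_j)+RT\partial_{v_j}G^a,\,\dots)$ and decompose the velocity gradients of $F,G^a,\dots,G^n$ as in the one-energy case. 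This produces a term $-\sum_j\mathbf V_j^T\mathbb H\mathbf V_j\le 0$ by positive definiteness of $\mathbb H$, while the identities~\eqref{eq-FDGn} collapse the surviving first-order terms. After a further integration by parts using~\eqref{eq-mtsred2} and~\eqref{eq-defT}, I expect the estimate to factor into $\tau\mathcal{D}\le\sum_i 2\la(e^i_{int}(T)F-G^i)\bigl(\frac{1}{RT}-\frac{1}{RT^i_{int}(G^i/F)}\bigr)\ra_v$.

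The main obstacle, and the precise reason the extension goes through cleanly, is the block structure of the Hessian: since $D_{G^a,G^b}H=0$ for $a\ne b$ (Proposition~\ref{minentropgen}), the quadratic form $\mathbf V_j^T\mathbb H\mathbf V_j$ splits into one shared $F$-contribution plus a single $2\times2$ block per energy, so property~\eqref{eq-FDGn} applies blockwise and the whole estimate reduces to a sum of one-energy terms. Each summand is then non-positive by the same monotonicity argument: if $\frac{1}{RT}-\frac{1}{RT^i_{int}(G^i/F)}\le 0$ then $T^i_{int}(G^i/F)\le T$, hence $G^i/F=e^i_{int}(T^i_{int}(G^i/F))\le e^i_{int}(T)$ because each $e^i_{int}$ is increasing, making the first factor non-negative and the product $\le 0$ (and symmetrically in the opposite case). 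Summing over $i$ gives $\mathcal{D}\le 0$, which concludes the proof.
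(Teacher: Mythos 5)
Your overall strategy is the right one --- the paper itself omits the proof of this proposition, stating that it is ``exactly the same as before'', and your reduction to the one-energy argument of Proposition~\ref{prop:FP} via the block structure of the Hessian is precisely the intended route. The H-theorem part is sound: since $D_{G^a,G^b}H=0$ for $a\ne b$, the identities~\eqref{eq-FDGn} do apply blockwise, the quadratic form $\mathbf V_j^T\mathbb H\mathbf V_j$ is nonnegative, and the final factorised bound $\tau\mathcal{D}\le 2\sum_i\la(e^i_{int}(T)F-G^i)(\frac{1}{RT}-\frac{1}{RT^i_{int}(G^i/F)})\ra_{v}$ is termwise nonpositive by the monotonicity of each $e^i_{int}$. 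However, two of your intermediate claims are false as stated, because you treat as separately zero quantities that only vanish in the aggregate.

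First, in the energy conservation: out of equilibrium $\la\frac12|v-u|^2F\ra_{v}$ is \emph{not} equal to $\frac32\rho RT$, because $T=\mathbb T(e)$ is defined from the \emph{total} energy in~\eqref{eq-mtsred2}, not from the translational part alone. Hence $\la\frac12|v|^2\divv((v-u)F+RT\nabla_vF)\ra_{v}=\frac1{RT}$-free computation gives $3R\rho T-2\la\frac12|v-u|^2F\ra_{v}\ne 0$, and likewise $\frac2\tau(\rho\sum_ie^i_{int}(T)-\sum_i\la G^i\ra_{v})\ne 0$ in general; only the \emph{sum} of these two contributions cancels, via $\frac32RT+\sum_ie^i_{int}(T)=e$. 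Second, and more seriously, in the equilibrium property the $n$ arguments are \emph{not} independent: even with $F=M_{int}$, the moment constraints~\eqref{eq-mtsred2} only give $\sum_i(\rho e^i_{int}(T)-\la G^i\ra_{v})=0$, not $\rho e^i_{int}(T)-\la G^i\ra_{v}=0$ for each $i$ separately, since the individual $\la G^i\ra_{v}$ are unconstrained. The one-energy chain therefore cannot be reproduced ``verbatim'' per index; instead, from $D_{G^i}=0$ you should first extract, for each $i$,
\begin{equation*}
0\le \frac{1}{e^i_{int}}\la \bigl(\nabla_v\tfrac{G^i}{M_{int}}\bigr)^TM_{int}\nabla_v\tfrac{G^i}{M_{int}}\ra_{v}
+\frac2\tau\la\frac{(e^i_{int}M_{int}-G^i)^2}{e^i_{int}M_{int}}\ra_{v}
=\frac2\tau\bigl(\rho e^i_{int}(T)-\la G^i\ra_{v}\bigr),
\end{equation*}
so that each right-hand side is nonnegative, and \emph{then} sum over $i$ and use that the sum vanishes to force every term --- and hence every quadratic form --- to be zero, yielding $G^i=e^i_{int}(T)M_{int}$ for all $i$. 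With these two repairs the proof closes.
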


\begin{proposition}
  The moments of the solution of the Fokker-Planck kinetic
  model \eqref{eq-FPFn}-\eqref{eq-FPGn} satisfy,
  up to $O(\Kn^2)$, the Navier-Stokes equations
\begin{equation}
\begin{split}
& \dt \rho + \nabla \cdot \rho u = 0, \\
& \dt \rho u + \nabla\cdot  (\rho u\otimes u) + \nabla p = -\nabla
\cdot \sigma, \\
& \dt E + \nabla \cdot (E+p)u = -\nabla\cdot q - \nabla\cdot(\sigma u),
\end{split}
\end{equation}
where the shear stress tensor and the heat flux are given by
\begin{equation}  
\sigma = -\mu \bigl(\nabla u + (\nabla u)^T -\alpha\nabla\cdot
u\bigr), \quad \text{and} \quad  q=-\kappa \nabla \cdot T,
\end{equation}
and where the following values of the viscosity and heat
transfer coefficients (in dimensional variables) are
\begin{equation}  
\begin{split}
& \mu = \frac{1}{2}\tau p, \quad \text{and} \quad
\kappa = \frac{2}{3}\mu c_p(T) ,
\end{split}
\end{equation}
 while the volumic viscosity coefficient is $
 \alpha=\frac{c_p(T)}{c_v(T)}-1 $ for both models, and  $c_p(T)=\frac{d}{dT}(e(T)+p/\rho) = c_v(T) +R$ is the heat capacity at constant pressure.
Moreover,
the corresponding Prandtl number is
\begin{equation} 
\Pr = \frac{\mu c_p(T)}{\kappa}=\frac{3}{2}
\end{equation}

\end{proposition}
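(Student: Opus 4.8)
The plan is to run the Chapman--Enskog expansion exactly as in the single internal energy Fokker--Planck computation of Section~\ref{sec:CE}, the only genuinely new ingredient being the bookkeeping of the $n$ decoupled internal modes. First I would write $F = M_{int}(1+\Kn F_1)$ and $G^i = e_{int}^i M_{int}(1+\Kn G_1^i)$ for $a\le i\le n$, so that $\sigma(F) = pI + \Kn\,\sigma(M_{int}F_1)$ while the heat flux becomes $q = \Kn\,q(M_{int}F_1,e_{int}^a M_{int}G_1^a,\dots,e_{int}^n M_{int}G_1^n)$. Since the stress tensor $\sigma(F)=\la F(v-u)\otimes(v-u)\ra_{v}$ involves $F$ alone, the entire momentum balance is word-for-word the one-energy case, and I expect to recover $\mu=\tfrac12\tau p$ together with the volume viscosity $\alpha=\tfrac{c_p}{c_v}-1$, now read with the total heat capacity $c_v=\tfrac32R+\sum_i c_v^i$.

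Next I would linearise the operators~\eqref{eq-DFDGn}: substituting the decomposition gives $D_F=\tfrac1\tau M_{int}L_F(F_1)+O(\Kn)$ and $D_{G^i}=\tfrac1\tau e_{int}^i M_{int}L_{G^i}(F_1,G_1^i)+O(\Kn)$, where $L_{G^i}(F_1,G_1^i)=L_F(G_1^i)+2(F_1-G_1^i)$ is the copy of~\eqref{eq-LFLG} attached to each mode. The structural fact that makes the proof close is that the modes do not interact: no $D_{G^i}$ contains $G^j$ for $j\ne i$, mirroring the vanishing cross derivatives $D_{G^i,G^j}H=0$ of Proposition~\ref{minentropgen}. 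Building the Euler-level source terms as in~\eqref{eq-TMvib}--\eqref{eq-TevibMvib}, with $e_{int}^i$ replacing $e_{int}$ and the total $c_v$ inside the tensors, each mode produces its own pair $\tilde A^i,\tilde B^i$, and the eigenfunction relations $L_F(A)=-3A$, $L_F(B)=-2B$, $L_{G^i}(A,\tilde A^i)=-3\tilde A^i$, $L_{G^i}(B,\tilde B^i)=-2\tilde B^i$ hold by the same short computation. Solving~\eqref{eq-F1G1FP} mode by mode then yields $F_1=aA\cdot\tfrac{\nabla T}{\sqrt T}+bB:\nabla u$ and $G_1^i=\tilde a^i\tilde A^i\cdot\tfrac{\nabla T}{\sqrt T}+\tilde b^i\tilde B^i:\nabla u$ with the identical constants $\tilde a^i=a=-\tfrac13$ and $\tilde b^i=b=\tfrac12$ for every $i$.

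Finally I would insert these into $\sigma$ and $q$ and evaluate the Gaussian integrals of Appendix~\ref{app:CE}. The stress integral sees only $F_1$, giving $\mu=\tfrac12\tau p$ as claimed. The heat flux $q=\la(\tfrac12|v-u|^2F+\sum_i G^i)(v-u)\ra_{v}$ splits into the translational contribution plus one contribution per internal mode; each internal term carries the same prefactor $\tfrac23$ and adds its share ${e_{int}^i}'(T)$ to the effective heat capacity, so that summing over $i$ reassembles $\kappa=\tfrac23\mu c_p$ with $c_p=\tfrac52+\sum_i{e_{int}^i}'(T)=1+c_v$ in non-dimensional form, and hence $\Pr=\mu c_p/\kappa=\tfrac32$ independently of $n$. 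I expect the main obstacle to be not any single integral --- each is a transcription of the one-energy proof --- but the clean verification that the modes never couple and that all the internal heat-flux terms recombine into the single total $c_p$, so that the Prandtl number is exactly $\tfrac32$ no matter how many energies are activated.
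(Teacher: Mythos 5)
Your proposal is correct and follows exactly the route the paper intends: the paper omits this proof, stating it is "exactly the same as before," and your mode-by-mode Chapman--Enskog expansion is precisely the one-energy Fokker--Planck argument of Section~\ref{sec:CE} with the decoupling $D_{G^i,G^j}H=0$ making each internal mode reproduce the eigenrelations and the constants $\tilde a^i=-\tfrac13$, $\tilde b^i=\tfrac12$ independently. The reassembly of the internal contributions into the total $c_p=\tfrac52+\sum_i{e_{int}^i}'(T)$ and hence $\Pr=\tfrac32$ is exactly what the paper's statement asserts.
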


\subsection{Comments and application to the vibrational case}

In the previous subsection we have explained how we can try to capture every kind of energy as long as they are strictly increasing functions of temperatures. We also have constructed an entropy adapted to this situation but to fully use the result one will have to create ESBGK or ES-Fokker Planck like models to capture different relaxations times. We now explain how to use this extension  for a diatomic vibrational gas. Such a gas owns a translational, a rotational and a vibrational energy defined as functions of temperatures through:
\begin{align}
&  e_{tr}(T) =\frac32  RT \label{eq-etr1},\quad
  e_{rot}(T) =  RT ,\quad
  e_{vib}(T) =\frac{RT_0}{e^{T_0/T}-1},   
\end{align}
The associated macroscopic entropy's for internal degrees of freedom are 
\begin{align}
&  s_{rot}(e) = R\ln(e), \label{eq-srot1} \quad 
  s_{vib}(e) =\left(\frac{e}{T_0}+R\right)\ln \left(\frac{e+RT_0}{RT_0}\right)-\frac{e}{T_0}\ln\left(\frac{e}{RT_0}\right), 
\end{align}
which leads to the following kinetic entropy:
\begin{eqnarray*}
&&\mathcal{H}(F,G^{rot},G^{vib})\\
&=& \la F\log(F)- F\frac{s^{rot}}R\left(\frac {G^{rot}}F\right) 
-F\frac{s^{vib}}R\left(\frac {G^{vib}} F\right) \ra_{v} \\
&=& \la F\log(F)+ F\ln\left(\frac F{G^{rot}}\right) +F\ln\left(\frac {RT_0F}{RT_0F+G^{vib}}\right)+\frac {G^{vib}}{RT_0}\ln\left(\frac{G^{vib}}{RT_0F+G^{vib}}\right) \ra_{v}.
\end{eqnarray*}
The expression of the vibration's entropy recovers the one given in \cite{hdr,mathiaud2019BGK} and the expression for rotations  the one in \cite{esbgk_poly}.

\section{Conclusion and perspectives}
\label{sec:conclusion}

In this paper, we have proposed to different models (BGK and
Fokker-Planck) of the Boltzmann equation that allow for thermally perfect gases. These models satisfy the conservation and
entropy property (H-theorem) and are using reduced distribution functions with only velocity as a kinetic variable. The low complexity of the reduced BGK model can make it attractive to be implemented
in a deterministic code, while the Fokker-Planck model can be easily
simulated with a stochastic method.
Of course, since these models are based on a single time relaxation,
they cannot allow for multiple relaxation times scales but we have made ground for standard procedures  like the ellipsoidal-statistical approach, already used to
correct the Prandtl number of the BGK model~\cite{esbgk_poly} and
Fokker-Plank models~\cite{hdr} by already defining models with one equation for each kind of energy  in the last section of this  paper.

\appendix
\section{Gaussian integrals and other summation formula}
\label{app:CE}

In this section, we give some integrals and summation formula that are
used in the paper. 

First, we remind the definition of the absolute Maxwellian $M_0(V) =
\frac{1}{(2\pi)^{\frac32}}\exp(-\frac{|V|^2}{2})$. We denote by
$\cint{\phi} = \int_{\R^3}\phi(V)\, dV$ for any function $\phi$. It is standard to
derive the following integral relations (see~\cite{chapmancowling},
for instance and note that some computations are redundant), written with the Einstein notation:
\begin{align*}
&   \cint{M_0}_V = 1, \\
&   \cint{V_iV_jM_0}_V = \delta_{ij}, \qquad \cint{V_i^2M_0}_V = 1,
  \qquad \cint{|V|^2M_0}_V = 3, \\
& \cint{V_i^2V_j^2M_0}_V = 1 + 2\, \delta_{ij},  \qquad \cint{V_iV_jV_kV_lM_0}_V = \delta_{ij}\delta_{kl}  +
  \delta_{ik}\delta_{jl}  + \delta_{il}\delta_{jk}  \\
& \cint{V_iV_j|V|^2M_0}_V = 5 \,\delta_{ij},  \qquad \cint{|V|^4M_0}_V
  = 15, \\
& \cint{V_iV_j|V|^4M_0}_V = 35 \,\delta_{ij},  \qquad \cint{|V|^6M_0} = 105,
\end{align*}
while all the integrals of odd power of $V$ are zero.
From the previous Gaussian integrals, it can be shown that for any
$3\times 3$ matrix $C$, we have
\begin{equation*}
\cint{V_iV_jC_{kl}V_kV_lM_0}_V = C_{ij} + C_{ji} + C_{ii}\delta_{ij}.
\end{equation*}

 \bibliographystyle{unsrt}
 \bibliography{biblio}

\end{document}